\newcommand{\blind}{1}
\definecolor{mydarkgreen}{RGB}{39,130,67}
\definecolor{mydarkred}{RGB}{192,25,25}
\definecolor{mydarkblue}{RGB}{0,0,140}
\definecolor{darkgreen}{rgb}{0.00,0.5,0.00}
\newtheorem{theorem}{Theorem}
\newtheorem{assumption}{Assumption}
\newtheorem{definition}{Definition}
\newtheorem{lemma}{Lemma}
\newtheorem{proposition}{Proposition}
\crefname{assumption}{assumption}{assumptions}
\theoremstyle{definition}
\newtheorem{remark}{Remark}
\newtheorem{example}{Example}
\newcommand{\WR}{\mathrm{WR}}
\let\lll\ll
\renewcommand{\ll}{\mathbf{l}}
\newcommand{\E}{\mathbb{E}}
\newcommand{\indep}{\perp \!\!\! \perp}
\newcommand{\cC}{\mathcal{C}}
\newcommand{\cS}{\mathcal{S}}
\newcommand{\cY}{\mathcal{Y}}
\newcommand{\cZ}{\mathcal{Z}}
\newcommand{\cR}{\mathcal{R}}
\newcommand{\R}{\mathbb R}
\newcommand{\cE}{\mathcal{E}}
\DeclareMathOperator{\argmin}{argmin}
\def\<#1,#2>{\langle #1,#2\rangle}
\renewcommand{\leq}{\leqslant}
\renewcommand{\geq}{\geqslant}
\renewcommand{\le}{\leqslant}
\def\<{\langle}
\def\>{\rangle}
\def\eps{\varepsilon}
\def\var{{\rm var\,}}
\newcommand{\esp}[1]{\mathbb{E}\left[#1\right]}
\newcommand{\NRM}[1]{{{\left\| #1\right\|}}} 
\newcommand{\set}[1]{{{\left\{ #1\right\}}}} 
\newcommand{\proba}[1]{\mathbb{P}\left(#1\right)}
\renewcommand{\P}{\mathbb{P}}
\newcommand{\cB}{\mathcal{B}}
\newcommand{\cN}{\mathcal{N}}
\newcommand{\cX}{\mathcal{X}}
\newcommand{\cW}{\mathcal{W}}
\newcommand{\dd}{{\rm d}}
\newcommand{\cF}{\mathcal{F}}
\newcommand{\cP}{\mathcal{P}}
\newcommand{\cH}{\mathcal{H}}
\newcommand{\cD}{\mathcal{D}}
\newcommand{\one}{\mathds{1}}
\begin{document}

\def\spacingset#1{\renewcommand{\baselinestretch}%
{#1}\small\normalsize} \spacingset{1}


\if1\blind
{
  \title{\bf Rethinking the Win Ratio: A Causal Framework for Hierarchical Outcome Analysis}
  \author{Mathieu Even\hspace{.2cm}\\
    Theremia and PreMeDICaL Inria-Inserm, University of Montpellier, France\\
    and \\
    Julie Josse \\
    PreMeDICaL Inria-Inserm, University of Montpellier, France}
  \maketitle
} \fi

\if0\blind
{
  \bigskip
  \bigskip
  \bigskip
  \begin{center}
    {\LARGE\bf Rethinking the Win Ratio: A Causal Framework for Hierarchical Outcome Analysis}
\end{center}
  \medskip
} \fi

\bigskip
\abstract{
 Quantifying causal effects in the presence of complex and multivariate outcomes remains a key challenge in treatment evaluation. For hierarchical multivariate outcomes, the FDA recommends the Win Ratio and Generalized Pairwise Comparisons approaches \citep{Pocock2011winratio,Buyse2010}. However, commonly used  estimators  can yield treatment recommendations that target a population-level estimand (the probability that a randomly sampled patient under treatment fares better than another randomly sampled patient under control), which can contradict conclusions drawn from an ideal estimand (the probability that an individual would fare better with treatment than without), especially in heterogeneous populations.
   This discrepancy arises from the non-identifiability of the latter estimand and underscores both the  influence of the chosen causal measure on the resulting conclusions and the necessity of articulating the underlying causal framework with clarity. 
    We propose a novel, individual-level yet identifiable causal effect measure that more closely approximates the ideal individual-level estimand. We show that computing the Win Ratio or Net Benefit via nearest-neighbor pairing between treated and control patients, which can be seen as an extreme form of stratification, yields an estimator of our new causal measure in both randomized controlled trials and observational settings. We then develop a distributional regression framework, alongside semiparametric efficient estimators. Our methods are simple to implement and readily applicable in practice.
    We evaluate the proposed approach through simulations and apply it to the CRASH-3 trial \citep{crash3}, a major study assessing the effects of tranexamic acid in patients with traumatic brain injury.
}

\maketitle

\renewcommand\thefootnote{}

\renewcommand\thefootnote{\fnsymbol{footnote}}
\setcounter{footnote}{1}

\section{Introduction}

Quantifying the benefit of a treatment in clinical research can be challenging, especially when outcomes are complex, multidimensional, or involve competing risks.
Traditional statistical approaches often struggle to capture the nuanced relationships between such outcomes, limiting their ability to provide clinically meaningful insights.
In these cases, innovative methods are required to address the inherent complexity of the data, to go beyond considering a single composite summary outcome.
The Win Ratio \citep{Pocock2011winratio} and Generalized Pairwise Comparisons \citep{Buyse2010} have emerged as powerful tools to evaluate treatment effects by comparing groups through hierarchical and multidimensional assessments of outcomes,
to the point where they appear in the recent Food and Drugs Administration (FDA) guidances for handling multiple 
outcomes (see the FDA report \emph{Multiple Endpoints in Clinical Trials Guidance for Industry, 2022}).

The Win Ratio and Net Benefit methodologies \citep{Pocock2011winratio,Buyse2010} form pairs of patients, each pair consisting of a patient in the control group and a treated patient.
Each pair is then considered as a ‘‘Win'' if the outcome of the treated patient is considered as more favorable than the control one, as a ‘‘Loss'' if it is considered as less favorable, and as a ‘‘Tie'' if the two patient outcomes are comparable.
We here illustrate the hierarchical comparison process, drawing inspiration from examples in cardiovascular trials \citep{Pocock2011winratio,Redfors2020}.
These studies evaluate multiple endpoints consisting of death, stroke, and heart failure hospitalizations (HFH), prioritizing events based on their clinical severity.
In this hierarchy, death is considered the most severe outcome, followed by stroke, and finally HFH.
For each patient pair, comparisons proceed as follows.
\emph{(i)}
Determine which patient died first during their shared follow-up period.
If one patient died earlier, the other patient is deemed the "winner" for this pair.
\emph{(ii)}
If neither patient died, assess who experienced a stroke first.
The patient with the later or no stroke is considered the "winner."
\emph{(iii)}
If neither patient died nor had a stroke, compare the number of HFHs during follow-up.
The patient with fewer HFHs is declared the "winner."
This hierarchical process stops at the first event that distinguishes between a win or loss for the pair.
If no event differentiates the pair, the outcome is recorded as a tie: depending on the methodology used, a tie can then be counted as a loss, as 1/2 instead of 1 or 0 (for respectively win or loss), or simply discarded.
This approach ensures that clinically meaningful priorities are respected while maximizing the utility of the available data.
For $Y_i$ and $Y_j$  the (multidimensional) outcomes of two patients $i$ and $j$, respectively treated and control, we write
\[Y_i\succ Y_j\,,\]
for a win of $i$ over $j$ and $Y_i\sim Y_j$ for a tie.
The Win-Proportion, the Win Ratio and the Net-Benefit of the treatment are then defined as 
\begin{equation*}
	\mathrm{Win\, Proportion}= \frac{\#\mathrm{Wins}}{\#\mathrm{Pairs}}\,,\quad \mathrm{Win\, Ratio}= \frac{\#\mathrm{Wins}}{\#\mathrm{Losses}}\,,\quad \mathrm{Net\, Benefit}= \frac{\#\mathrm{Wins}-\#\mathrm{Losses}}{\#\mathrm{Pairs}}\,.
\end{equation*}
Treatment recommendations are then made using these computed values.
If the win proportion is (significantly) above 0.5, treatment should be preferred over non-treatment, while for the Win Ratio and the Net Benefit the threshold values are respectively 1 and 0.


Considering different pairings between treated and control patients can lead to distinct treatment recommendations.  As illustrated in \Cref{fig:comp_WR_all_NN_adv}, even in a simple synthetic setting with randomized treatment assignment, complete pairings—where every treated patient is compared with every control—and nearest-neighbor matching on covariates yield opposite recommendations. This phenomenon arises from two key subtleties: first, different treatment-effect measures can  lead to different clinical decisions; second, the asymptotic behavior of pairwise comparisons is highly sensitive to the choice of pairing scheme. Clarifying which causal estimand is being targeted and understanding its properties are therefore essential for selecting an appropriate methodological approach.

In order to better understand the behavior of generalized pairwise comparison methods such as the Win Ratio, several works have put these approaches into a causal inference framework. This line of work draws inspiration from Mann-Whitney-Wilcoxon comparisons \citep{Wilcoxon1945,Mann1947}, that test if a univariate random variable is stochastically larger than another. 
An ideal treatment effect measure can be defined as the probability that a given individual fares better with than without treatment, by applying Mann-Whitney-Wilcoxon tests the counterfactual of a given individual.
Formally, given $Y_i(1)$ and $Y_i(0)$ the potential outcomes of a given patient with and without treatment \citep{splawa1990application}, this ideal individual-level estimand writes as:
\begin{equation}\label{eq:intro_indiv}
	\proba{Y_i(1)\succ Y_i(0)}\,.
\end{equation}
However, this quantity is \emph{non-identifiable}: it depends on the joint distribution of the potential outcomes, which is \textit{never} observed.
The estimand in \Cref{eq:intro_indiv} has then been mentioned in a series of works \citep{Mao2017,Guo2022,Chen2024,Yin2022,zhang2022causalinferencewinratio,Chiaruttini2024}, that circumvent its non-identifiability by instead estimating a \textit{population-level} measure, that can be defined as:
\begin{equation}\label{eq:intro_pop}
    \proba{Y_i(1)\succ Y_j(0)}\,,
\end{equation}
where $i$ and $j$ are two independent patients sampled randomly. Under standart causal assumptions, this causal measure is identifiable.

As we highlight in subsequent sections, the population-level measure $\proba{Y_i(1)\succ Y_j(0)}$ does not capture heterogeneity in the population, and even in randomized control trial settings, might favor treatment although a minority benefits from it.

In this paper, we introduce a third causal effect measure for quantifying treatment effects using generalized pairwise comparisons—one that is both individual-level and identifiable. Rather than substituting in \Cref{eq:intro_indiv} the unobserved counterfactual with the observed outcome of an independently sampled patient, as done in the population-level measure of \Cref{eq:intro_pop}, our approach replaces it with the counterfactual of another patient conditional on sharing the same covariates. This captures individual-level comparisons while preserving identifiability.
Informally, if patients $i$ and $j$ have features $X_i$ and $X_j$, the newly introduced causal estimand writes as:
\begin{equation}\label{eq:intro_cond}
    \proba{Y_i(1)\succ Y_j(0)|X_i=X_j}\,.
\end{equation}
Our paper is thus devoted to introducing formally this causal effect measure, to discuss its properties, compare it with the two other existing measures in this context introduced above, and provide estimation strategies.
In particular, and most importantly, we show that traditional generalized pairwise comparisons based on pairing strategies can be combined with covariate-matching methods in causal inference \citep{abadie2006large}, to estimate \Cref{eq:intro_cond}, both in RCT and observational settings. 

\subsection{Contributions and outline of the paper.}

\begin{figure}    
    \centering
    \begin{minipage}{0.6\textwidth}
        \centering
			\centering
  \includegraphics[width=\linewidth, trim=0 145 0 170, clip]{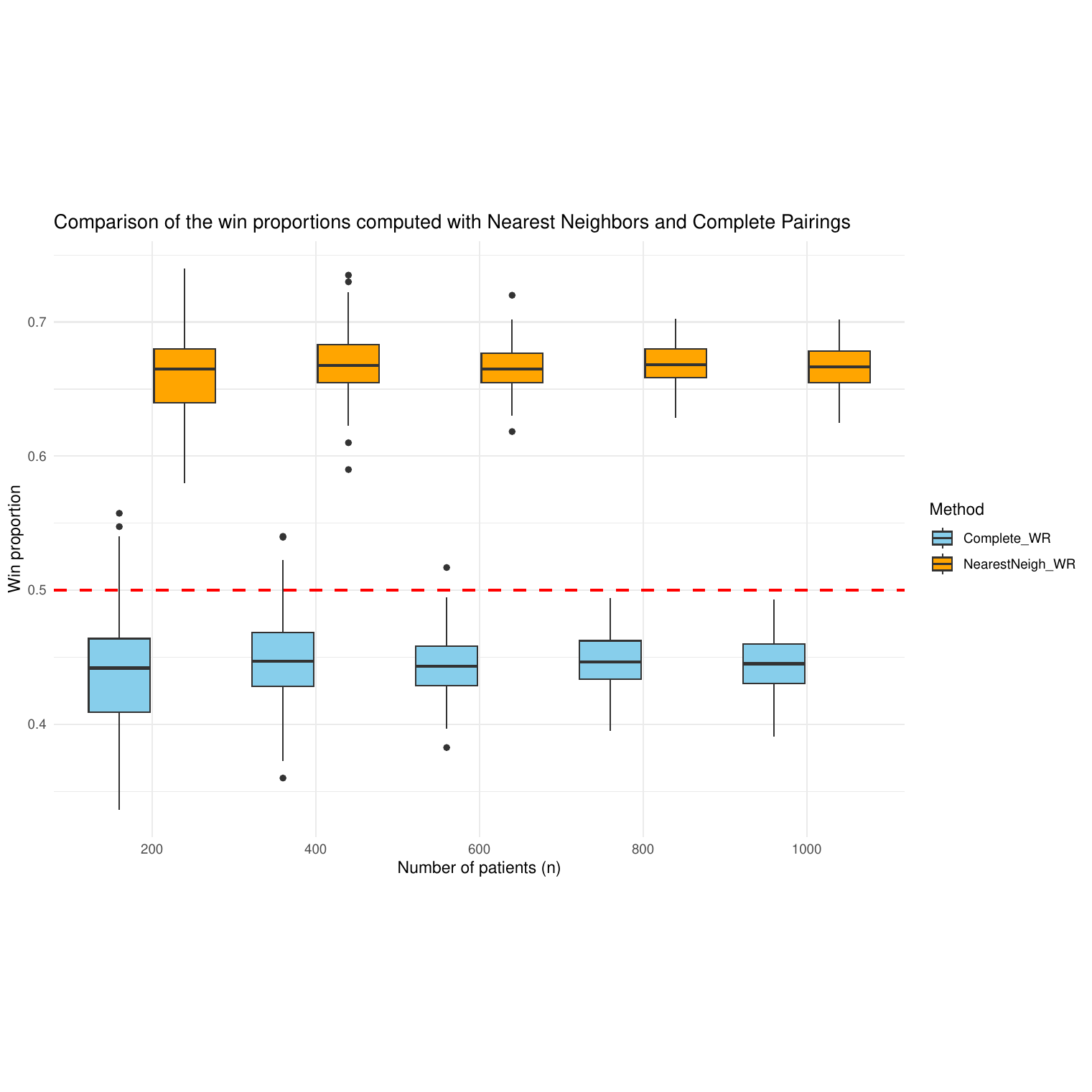}
		    \end{minipage}%
    \begin{minipage}{0.3\textwidth}
        \caption{Comparison of the win proportion $p_\mathrm{W}$ computed with complete pairings and Nearest Neighbor pairings.
		Setting of \Cref{ex:counter_emp}.
        Boxplots over 100 runs.
        The two approaches lead to different treatment recommendations (above and below 0.5).
		}
\label{fig:comp_WR_all_NN_adv}
    \end{minipage}
\end{figure}

We introduce our new causal estimand in \Cref{sec:estimands}, along with the causal assumptions and notations used throughout. Its main properties are detailed in \Cref{sec:misc_remarks}, where we compare it with the two existing measures (\Cref{eq:intro_indiv,eq:intro_pop,eq:intro_cond}). In short, our estimand is collapsible, interpretable as the mean of individual-level effects, and more robust to population heterogeneity and unknown strata—features illustrated through a synthetic example inspired by \Cref{fig:comp_WR_all_NN_adv}.

In \Cref{sec:matchings}, we study two natural pairing strategies in randomized trials—complete pairings and nearest-neighbor matching—and show they consistently estimate the population-level (\Cref{eq:intro_pop}, a result that was already known) and our proposed estimand (\Cref{eq:intro_cond}), respectively. We then extend the analysis of nearest neighbor pairing to the observational setting, and show that it naturally inherits the properties of covariate-matching strategies to balance for observed confounders \citep{abadie2006large,zubizarreta2012using}.
The matching estimator is thus consistent in both RCT and observational settings.

We then depart form matching methods and from the existing literature, by developing estimation methods that directly rely on estimating nuisance parameter functions (\Cref{sec:nuisance}), to face the fact that Nearest Neighbors may be slow to converge in the presence of high dimensional input features and their lack of robustness in the presence of missing data in the covariates. We propose a distributional regression approach using distributional random forests (\Cref{sec:distributional_regression}), and further correct first-order biases by  using the efficient influence function, yielding a semiparametric efficient estimator (\Cref{sec:EIF}).


Finally, we illustrate our methodology on synthetic observational data and the CRASH-3 trial \citep{crash3}, a large multi-center RCT that included over 12,000 patients in 175 hospitals across 29 countries. In \Cref{sec:crash3},  we compare existing approaches with our own and provides practical guidance for estimation and inference. Whereas traditional methods produce non-significant results (confidence intervals including 1), our individual-level estimand leads to statistically significant conclusions—highlighting the impact of the estimand choice.

\subsection{Related works}

\paragraph{Hierarchical Outcome Analysis and practical advancements.}
\citep{Pocock2011winratio,Buyse2010} introduced the Win Ratio and Generalized Pairwise Comparisons, extensions of \emph{Wilcoxon-Mann-Whitney tests} \citep{Wilcoxon1945,Mann1947}, which test whether $Y$ is stochastically larger than $Z$ via $\proba{Y \geq Z}$. Originally developed in the context of heart failure studies \citep{Pocock2011winratio}, the method continues to be actively investigated and refined in this area \citep{cunningham2026win,lopes2026sacubitril}.
These methods have been applied to hierarchical outcomes: e.g., \citep{Pocock2023} in the EMPULSE trial, showing empagliflozin’s benefit (Win Ratio 1.38); \citep{backer2024design} in acute promyelocytic leukemia, prioritizing efficacy over tolerability; and \citep{boentert2024applying} in the COMET trial on Pompe disease.
\citep{Redfors2020} review design and reporting considerations, while \citep{Ajufo2023} caution that “wins” may lack clinical meaning, ties are often ignored, and patient-reported outcomes or risk stratification may bias results.
Further limitations include censoring issues \citep{Mao2024defining_estimands} and unaddressed missing covariate data.

Finally, and closest to our work, \citep{dong2018stratified} proposed a \emph{stratified Win Ratio} approach, inspired by stratified odds ratio approaches \citep{cochran1954some,mantel1959statistical}, with follow-up works further developing the interpretation of stratification in hierachical outcome comparisons \citep{gasparyan2021adjusted,dong2023win,matsouaka2022robust}.
Our approach and results in this paper tend towards recommending the use of as much stratification as possible when dealing with Win Ratio and more generally comparison-based estimators with heterogeneous populations.

\paragraph{Formalization of hierarchical outcome analyses.}
Beyond empirical guidance, recent work formalizes estimands for hierarchical comparisons and extends them to observational data via Wilcoxon-Mann-Whitney tests \citep{Wilcoxon1945,Mann1947}. 
\citep{Mao2017} recast pairwise comparisons as $U$-statistics and proposed (augmented) inverse propensity weighting to estimate the probability that a treated patient “wins” against a control. 
This population-level estimand has since been studied for rank-sum tests \citep{Chen2024,Yin2022} and contrast functions \citep{Guo2022}, and further applied in cancer studies \citep{Chiaruttini2024} or extended to dependent or clustered data \citep{zhang2022causalinferencewinratio,Zhang2021}.

\paragraph{Relation to PNS and PIMs.}
Our work on hierarchical outcomes, with generalized win functions (\textit{i.e.,} replacing $\one_{y\succ y'}$ by some general comparison function $w(y,y')$) is also related to works done arount \textit{Probability of Necessity and Sufficiency} (PNS) and treatment effects with ordinal outcomes.
PNS is the probability that treatment improves the outcome, and is necessary for doing so.
Concretely, the PNS writes as $\mathrm{PNS} = \proba{Y_i(1)=1,Y_i(0)=0}$, if outcomes are \textit{binary} and $Y_i=0$ corresponds to a worse outcome that $Y_i=1$.
For general real-valued outcomes, the PNS will then write as $\mathrm{PNS} = \proba{Y_i(1)>Y_i(0)}$, which is actually a subcase of the individual-level yet unidentifiable causal measure previously defined in \Cref{eq:intro_indiv}.
The average is here taken over all individuals.
From the PNS, population-level decisions can be made, by quantifying the average individual benefits of treatment over the whole population.
Due to unidentifiability of the PNS, \citep{mueller2023personalized} instead resort to providing \textit{identifiable} bounds on the PNS (called probability of benefit), from combined RCT and observational data.
Thus, from such studies, a trialist will obtain $[a_\mathrm{PNS},b_\mathrm{PNS}]$, an interval in which the PNS lies in. That interval may be uninformative due to the non-identifiability of the PNS, but it may also lead to point intervals ($a_\mathrm{PNS}=b_\mathrm{PNS}$) in favorable cases.
The same approach also exists for \textit{ordinal outcomes} \citep{huang2017inequality,lu2018treatment}.
Our approach, that instead consists in defining an individual-level and identifiable measure, is orthogonal. Our outcomes and frameworks are more general than ordinal or real-valued outcomes, so that bounds like \citep{tian2000probabilities} are not possible in our setting ($\proba{Y_i(1)\succ Y_i(0)}$ cannot be upper or lower bounded by quantities such as $\esp{Y_i(1)}$, $\esp{Y_i(0)}$, etc, if outcomes are multivariate).

\citep{thas2012probabilistic} introduced Probabilistic Index Models (PIMs), that are defined as models for the Probabilistic Index statistics $\proba{Y\succ Y^*|X,X*}$, where $(Y,X)$ is a couple of observed outcome and features, and $(X^*,Y^*)$ is an independent copy of $(X,Y)$. If $X$ is the treatment, this becomes equivalent to the Mann-Whitney-Wilcoxon tests.
Formally, this PIM definition shares similarity with the way we define our causal estimand, in that an independent copy is introduced. PIMs are however conceptually different from what we introduce in terms of objectives, since we compare independent copies that are treated and control, but share same features.

\paragraph{Matching methods.}
Matching methods such as nearest neighbors are a core tool in causal inference for creating comparable treatment groups and mitigating confounding in observational studies \citep{stuart2010matching}. Advances include optimal matching with exact covariate balance constraints \citep{zubizarreta2012using}, theoretical results on large-sample properties and bias correction \citep{abadie2006large}, and recent overviews of modern extensions and applications \citep{wong2024handbook}. The approach developed in this paper generalizes covariate-based matching for estimating the average treatment effect with the risk difference as in \citep{abadie2006large}, to estimating the hierarchical outcomes-based causal measures we introduce.
Importantly, note that covariate-based matching in these papers aims at adjusting for confounders: nearest-neighbor thus only needs to be performed on these confounding variables.
In our context, as discussed further in \Cref{sec:misc_remarks}, matching also needs to be performed on treatment effect modifiers.

\section{Causal inference framework for Win Ratio and generalized pairwise comparisons}

\subsection{Definitions and assumptions}

We assume access to $n$ i.i.d.\ samples, associated with feature vectors $X_i\in\cX$, treatment $T_i\in\{0,1\}$, and observed outcome $Y_i\in\cY$ (possibly multivariate, e.g.\ $\cY\subset\R^d$). Under the \emph{potential outcome} framework \citep{splawa1990application}, each patient has two outcomes $Y_i(0),Y_i(1)\in\cY$, only one of which is observed.
We make the \textit{SUTVA}, \emph{unconfoundedness} and \emph{positivity} assumptions.

\begin{assumption}[SUTVA]\label{hyp:sutva}
$Y_i = Y_i(T_i)$ for all $i\in[n]$.
\end{assumption}
\begin{assumption}\label{hyp:unconfoundedness}
$\{Y_i(0),Y_i(1)\} \perp T_i \mid X_i$.
\end{assumption}

\begin{assumption}\label{hyp:positivity}
There exists $\eta\in(0,1)$ such that $\eta \le \pi(x) \le 1-\eta$ for all $x\in\cX$,
where $\pi(x) = \proba{T_i=1 \mid X_i=x}$.
\end{assumption}

We want to know if a given patient would fare better under treatment than without it.
In the potential outcomes framework, there exists several causal measures to quantify this, amongst which the \emph{Risk Difference} (RD) if $\cY\subset\R$, for which the Average Treatment Effect (ATE) writes as $\tau_\mathrm{RD}= \esp{Y_i(1)-Y_i(0)}$.
However, we would like to handle more general outcomes and thus define new and more general causal estimands in \Cref{sec:estimands}.
Given two outcomes $y,y'\in\cY$, treatment effects cannot always be quantified directly with the sign and magnitude of their difference $y-y'$. Instead, we consider a more general ``win function'', that quantifies preferences.
Our framework generalizes the \emph{lexicographic order}\footnote{The \emph{lexicographic order} is a total order on $\R^d$, defined as $y\succ y'$ if and only if $\inf\set{k\in[d],y_k>y_k'}<\inf\set{k\in[d],y_k<y_k'}$, with the convention that the infimum of an empty set is $+\infty$.
In plain words, the \emph{lexicographic order} amounts to order vectors as words in the dictionary.}
beyond the setting introduced by \citep{Pocock2011winratio} for hierarchical outcomes.

\begin{definition}[Win function]
	Let 
	\begin{equation*}
		w\,:\,(y,y')\in\cY^2 \mapsto w(y,y')\in[0,1]\,,
	\end{equation*}
	be the \emph{win function}, taking two outputs $y,y'$ to compare, and outputing a value between 0 and 1.
\end{definition}

A typical example is:
\begin{equation}\label{eq:ex_win_proba}
	w(y|y')=\left\{     
	\begin{aligned}
		& 1\quad &\text{if}& \qquad y\succ y'\\
		&\frac{1}{2}\quad &\text{if}& \qquad y\sim y'\\
		&0 \quad &\text{if}&\qquad y\prec y'
	\end{aligned}
	\right.\qquad,
\end{equation}
where $\succ$ is an order on $\cY$, that we refer to as \emph{clinical order}, and $\sim$ means that the outcomes are similar or cannot be compared.
If ‘‘ties'' are discarded, the win function writes as:
\begin{equation}\label{eq:ex_win_proba_wo_ties}
	w(y|y')=\left\{     
	\begin{aligned}
		& 1\quad &\text{if}& \qquad y\succ y'\\
		&0 \quad &\text{if}&\qquad y\preceq y'
	\end{aligned}
	\right.\qquad,
\end{equation}
\citep{Pocock2011winratio}
first introduced the Win Ratio to handle composite endpoints in clinical trials based on clinical priorities, without formalizing it using the potential outcomes framework.
In their example, outcomes $Y_i$ are of dimension $2$ and potential outcomes
$Y_1(t),Y_2(t)\in\R\cup\set{\infty}$ would respectively correspond to the time after treatment (or non treatment) before an eventual cardiovascular death event, and the time after (non-)treatment before an eventual heart-failure hospitalization.
$Y_i(t)=\infty$ means that no such event occurred.
The win function here writes as: $w(y|y')=\one_\set{y_1>y_1'}+\one_\set{y_1=y_1'\,,\,y_2>y_2'}$, \emph{i.e.} $w(y|y')=1$ if and only if $y$ is strictly larger than $y'$ for the lexicographic order.
Note that in this example, ties are counted as losses, which may not always be the case,
as this may cause problems when ties should not be discarded nor treated as losses, as pointed out by \citep{Ajufo2023}.

\paragraph{Notations.}
For a sequence $(Z_n)_{n\geq 0}$ of random variables with values in a metric space $(\cZ,d)$ and $Z$ a random variable in $\cZ$, we say that $Z_n$ converges in probability towards $Z$ if for any $\eps>0$ we have that $\proba{d(Z_n,Z)>\eps}\to0$.
We say that a measurable event $\cE$ is almost sure if its probability is 1.
We say that $Z_n$ converges almost surely towards some value $\ell$ if the event $\set{Z_n\to \ell}$ is almost sure.
$Z_n$ is a consistent estimator of some quantity $\ell$ if $Z_n$ converges in probability towards the constant random variable $\ell$.

\subsection{Traditional Win Ratio, Net-Benefit and Mann-Whitney-Wilcoxon comparison estimators}

Let $\cN_t=\set{i:T_i=t}$ for $t=0,1$ be respectively the control and treated groups.
Below, we formalize the estimators used by \citep{Pocock2011winratio,Buyse2010} for Win Ratio and Generalized Comparisons.
We refer to these estimators as \emph{traditional} or \textit{historical} Win Ratio and Net-Benefit.
\citep{Pocock2011winratio,Buyse2010} form pairs $\cC\subset \cN_1\times \cN_0$ and define 
\begin{equation}\label{eq:nb_wins}
	n_\mathrm{W} = \sum_{(y,y')\in\cC}w(y|y')\,,
\end{equation}
as the number of wins and $n_\mathrm{L}=|\cC|-n_\mathrm{W}$ as the number of losses.
The \emph{Win Proportion}, \emph{Win Ratio} \citep{Pocock2011winratio} and \emph{Net Benefit} \citep{Buyse2010} are defined as:
\begin{equation}\label{eq:win_prop}
	\hat p_\mathrm{W}=\frac{n_\mathrm{W}}{|\cC|}\,,\qquad \hat R_\WR = \frac{n_\mathrm{W}}{n_\mathrm{L}}\,,\qquad \hat\Delta_\mathrm{NB}= 2\hat p_\mathrm{W} -1=\frac{n_\mathrm{W}-n_\mathrm{L}}{|\cC|}\,.
\end{equation}
Defining a loss proportion as $\hat p_\mathrm{L}=\frac{n_\mathrm{L}}{|\cC|}$, note that we have that the Win Ratio writes as $\hat p_\mathrm{W}/\hat p_\mathrm{L}$ while the net benefit writes as $\hat p_\mathrm{W}-\hat p_\mathrm{L}$.

\paragraph{Pairing options.}
The choice of the set of pairs $\cC\subset\cN_0\times \cN_1$ of control and test individuals used to compute the number of wins in \Cref{eq:nb_wins} 
has a crucial impact on the quantity being computed.
The pair set $\cC$ might vary from the two natural following extremes:
\begin{enumerate}
	\item \emph{Complete pairings}, for which we have:
	\begin{equation}\label{eq:pair_complete}
		\cC_\mathrm{Tot}=\cN_0\times \cN_1\,.
	\end{equation}
	Complete pairings are the prevalent strategy in Win Ratio or Generalized Pairwise Comparisons analyses.
	\item \emph{Nearest-neighor pairings}, for which
	\begin{equation}\label{eq:pair_nn}
		    \begin{aligned}
&\cC_\mathrm{NN}=\cC_\mathrm{NN}^{(1)}\cup\cC_\mathrm{NN}^{(0)} \,,\\
\text{where} \quad 
&\cC_\mathrm{NN}^{(1)}=\set{(i,\sigma_t^\star(i))\,|\,i\in\cN_1}\\
\text{and}\quad &\cC_\mathrm{NN}^{(0)}=\set{(\sigma_t^\star(j),j)\,|\,j\in\cN_0}\,,
        \end{aligned}
	\end{equation}
	where $\sigma_t^\star(i):\cN_t\to\cN_{1-t}$ for $t\in\set{0,1}$ matches $i\in\cN_t$ to the treated or control individual $j\in\cN_{1-t}$ that has closest features in $\cX$ \emph{i.e.},
	$\sigma_t^\star(i)\in\argmin_{k\in\cN_{1-t}}\NRM{X_i-X_k}^2$.
	$\cC_\mathrm{NN}$ can also be generalized to $k-$Nearest Neighbors.	
\end{enumerate}

As highlighted in \Cref{ex:counter_emp}, using different pairings leads to different Win Proportions (and thus to different Win Ratios and Net Benefits), to the point where treatment recommendations may even differ.

\begin{example}\label{ex:counter_emp}
	Suppose that we have $n=6$ individuals with univariate and real outcomes ($\cY\subset \R$).
	Assume that for $i=1,2,3,4$, individuals are men (for which $X_i=0$) and we have $Y_i(1)=y_1>y_0=Y_i(0)$ while for $i=5,6$ individuals are women (for which $X_i=1$) and we have $Y_i(1)=y_1<y_0=Y_i(0)$, and that for $i\in\set{1,\ldots,6}$ we have $T_i=1$ if $i$ is an odd number.
	Assume then that $y_0'>y_1>y_0>y_1'$.
	Then, we have:
	\begin{equation*}
			\hat p_\mathrm{W}= 
\left\{		\begin{aligned}
	&\frac{2}{3}&\quad &\text{if}&\quad &\cC=\cC_\mathrm{NN}\\
	&\frac{4}{9}&\quad &\text{if}&\quad &\cC=\cC_\mathrm{Tot}\\
\end{aligned}\right.\quad,
	\end{equation*}
	leading to $\hat p_\mathrm{W}>1/2$ or $\hat p_\mathrm{W}<1/2$ and thus to different treatment decisions depending on the coupling pairs chosen.
	Here, treatment favors $4$ out of the $6$ patients, while no-treatment only favors $2$ out of the $6$: complete pairings thus favors the treatment option that benefits to only a minority of patients.
	This is further illustrated in \Cref{fig:comp_WR_all_NN_adv}
	in the setting of \Cref{ex:counter_emp},
	with $n/3$ women and $2n/3$ men, in a
    RCT setting with treatment probability of $1/2$.
\end{example}

\subsection{From estimators to causal measures and estimands}\label{sec:estimands}

The quantities introduced so far --- $\hat p_\mathrm{W},\hat R_\WR,\hat \Delta_\mathrm{NB}$ ---, are data-dependent estimators (hence the $\hat\cdot$ notation).
To efficiently capture treatment effects and treatment comparisons, we need to first answer the following crucial question:
\emph{what is the estimand that these estimators seek at estimating?}

\paragraph{A natural but non-identifiable causal estimand.}
To determine if individuals would fare better if treated or non-treated, we consider  a contrast function $w$, where for $y,y'\in\cY$, $w(y|y')$ quantifies the relative favorability of $y$ compared to $y'$.
This leads to consider the following causal effect measure, that compares the two potential outcomes of a given individual using the contrats/win function $w$:
\begin{equation}\label{eq:tau_indiv}
	\tau_\mathrm{indiv}=\esp{w(Y_i(1)|Y_i(0))}\,.
\end{equation}
If $w$ is respectively as in \Cref{eq:ex_win_proba} and \Cref{eq:ex_win_proba_wo_ties}, we have:
\begin{equation*}
	\tau_\mathrm{indiv}=\proba{Y_i(1)\succ Y_i(0)}+\frac{1}{2}\proba{Y_i(1)\sim Y_i(0)} \,,\qquad\text{and}\qquad \tau_\mathrm{indiv}=\proba{Y_i(1)\succ Y_i(0)}\,.
\end{equation*}
However, as highlighted by several works \citep{Mao2017,Guo2022,Chen2024,Yin2022,zhang2022causalinferencewinratio,Chiaruttini2024}, this causal measure is \emph{non-identifiable}
since estimating it in general requires the knowledge of the \emph{joint distribution} of the potential outcomes, which is never observed
\footnote{
	If $Y_i(0),Y_i(1)\sim\mathrm{Bernoulli(1/2)}$, the ATE with the risk difference $\esp{Y_i(1)-Y_i(0)}$ is identifiable (via e.g. taking the mean on test and control groups, in a RCT setting),
	while the ATE with $w(y|y')=\one_\set{Y_i(1)>Y_i(0)}$, that writes as $\proba{Y_i(1)>Y_i(0)}$, is not identifiable.
	Indeed, in that latter case, coupling $(Y_i(1),Y_i(0))$ as $Y_i(1)=Y_i(0)$ gives $\proba{Y_i(1)>Y_i(0)}=0$, while taking independent potential outcomes leads to $\proba{Y_i(1)>Y_i(0)}=\frac{1}{4}$.
	Since the distribution of the observations $(X_i,T_i,Y_i)$ does not change by taking either coupling but the value of $\esp{w(Y_i(1)|Y_i(0))}$ changes, we have non-identifiability: $\esp{w(Y_i(1)|Y_i(0))}$ cannot be expressed as a function of the law of the observations.
}.
$\tau_\mathrm{indiv}$ is indeed an \emph{individual-level} causal estimand \citep{Fay2024}, as it is directly a function of the joint probability distribution $\cP(\set{Y_i(0),Y_i(1)})$,
as opposed to \emph{population-level} causal estimands that are functions of $\set{\cP(Y_i(0)),\cP(Y_i(1))}$.
With the Risk Difference, we would have $w(y|y')=y-y'$, and the $\tau_\mathrm{indiv}$ would be equal to the ATE $\tau_\mathrm{ATE}=\esp{w(Y_i(1)|Y_i(0))}=\esp{Y_i(1)-Y_i(0)}$, which is identifiable thanks to the linearity of the contrast function.

\paragraph{A population-level causal estimand.}
To circumvent this non-identifiability issue of individual-level causal measures, \citep{Mao2017,Guo2022,Chen2024,Yin2022,zhang2022causalinferencewinratio,Chiaruttini2024} consider a population-level causal measure
$\tau_\mathrm{pop}$ instead of the indivual-level one.
Their causal measure writes as:
\begin{equation}\label{eq:win_pop}
	\tau_\mathrm{pop}= \esp{w(Y_i(1)|Y_j(0))}\,,
\end{equation}
where $i$ and $j$ are two different and independent individuals.
Considering two independent individuals leads to a different measure, that can now be estimated (see \Cref{sec:matchings}).

\paragraph{Our individual-level and identifiable causal estimand.}
Using $\tau_\mathrm{pop}$ leads to comparing individuals that may not be comparable, hence the following causal measure we introduce.
It is an identifiable relaxation of $\tau_\mathrm{indiv}$, defined by comparing patient $i$ with features $X_i$ to an independent copy \emph{that has the same features}.
Our measure $\tau_\star$ is an individual-level causal estimand.
We formalize this by using an independent copy of the counterfactual outcomes conditioned on the covariates: this does not require any stronger assumption, since an independent copy of a given random variable always exists.
Well-posedness, identifiability under classical assumptions and other misceleanous remarks are presented just after in \Cref{sec:misc_remarks}.

\begin{definition} \label{def:win_indiv} 
	For $x\in\cX$, let $\set{Y^{(x)}(0),Y^{(x)}(1)}$ be an independent copy of $\set{Y_i(0),Y_i(1)}|X_i=x$.
	Let
	\begin{equation*}
		\tau_\star(x)=\esp{w(Y^{(x)}(1)|Y_i(0))|X_i=x}\,,
	\end{equation*}
	and define
	\begin{equation}\label{eq:win_proba}
		\tau_\star = \esp{w(Y^{(X_i)}(1)|Y_i(0))}\,. 
	\end{equation}
\end{definition}

The quantities defined $\tau_\star(x)$ and $\tau_\star$ are respectively a conditional effect measure and a causal effect measure \citep{pearl2009causality} (equivalent of CATEs and ATE, respectively).
We now define the \emph{statistical estimands} related to our causal measures and to our causal estimands.
Recall that statistical estimands are functions of measurable quantities; as such, they cannot make appear counterfactual quantities such as potential outcomes.
For instance, the statistical estimand related to the causal estimand $\tau_\mathrm{RD}$ is $\esp{Y_i|T_i=1}-\esp{Y_i|T_i=1}$.
For the population-level causal measure $\tau_\mathrm{pop}$, the related statistical estimand writes as:
\begin{equation}\label{eq:estimand_pop}
	\esp{\esp{w(Y_i|Y_j)|T_i=1,T_j=0,X_i,X_j}}\,,
\end{equation}
while the statistical estimand related to $\tau_\star$ writes as
\begin{equation}\label{eq:estimand_star_1}
	\esp{\esp{w(Y^{(X_i)}(1),Y_i)|T_i=0,X_i}}\,,
\end{equation}
or, equivalently:
\begin{equation}\label{eq:estimand_star_2}
	\esp{\esp{w(Y_i,Y^{(X_i)}(0))|T_i=1,X_i}}\,.
\end{equation}
Under \Cref{hyp:sutva,hyp:positivity,hyp:unconfoundedness}, we have that these statistical estimands are equal to their associated causal estimands.
The relation with pairing approaches thus appear clear, after writing the statistical estimands: 
in a RCT, pairing all control and treated pairs and averaging the results will naturally estimate \Cref{eq:estimand_pop}.
On the other hand, pairing for each control unit its closest treated unit will estimate \Cref{eq:estimand_star_1}, while pairing for each treated unit its closest control unit will estimate \Cref{eq:estimand_star_2}. These two quantities being equal, the asymptotic behavior of the pairing strategies introduced in \Cref{eq:pair_nn} are equivalent.
This discussion is pursued more formally in \Cref{thm:consistency_WR} with consistency results.
In observational settings, the nearest neighbor pairing ($\cC_\mathrm{NN}$, $\cC_\mathrm{NN}^{(0)}$ or $\cC_\mathrm{NN}^{(1)}$) will have an impact on the asymptotic behavior of the resulting estimator, that will estimate either the ATE, the ATC or the ATT (average treatment effect on the control or treated).

\subsubsection{Comparisons of the three different causal effect measures}

We now discuss the differences of the three different causal effect measures introduced, to better grasp the novelty of ours ($\tau_\star$, \Cref{def:win_indiv}).

\paragraph{Collapsibility.}
A first motivation for preferring our measure $\tau_\star$ over the population-level counterpart $\tau_\mathrm{pop}$ is collapsibility. Indeed, $\tau_\star=\esp{\tau_\star(X_i)}$ by definition, making it directly collapsible, and highlighting $x\mapsto\tau_\star(x)$ as a natural conditional treatment effect.
On the other hand, although $\tau_\mathrm{pop}$ can also be expressed as the average of conditional effects—specifically, $\tau_\mathrm{pop} = \mathbb{E}[\tau_\mathrm{pop}(X_i)]$ where $\tau_\mathrm{pop}(x) = \mathbb{E}[w(Y_i(1) | Y_j(0)) \mid X_i = x]$—the conditional effect $\tau_\mathrm{pop}(x)$ still depends on the composition of the entire population, since in $\tau_\mathrm{pop}(x)$ an individual with covariates $x$ is compared to the whole population.
However, conditional effects should ideally reflect properties intrinsic to the individual given features rather than being influenced by the overall population. As a result, policies based on $\tau_\mathrm{pop}(X_i)$ are not truly personalized, since their recommendations are affected by the presence and characteristics of other individuals in the population.
This contradicts the rationale behind conditional effects and personalized recommendations, that should only be a function of the covariates, and not the whole population.
Note that $\tau_\star$ is also \emph{logic-respecting} \citep[Definition 6]{colnet2024riskratiooddsratio}.

\paragraph{Interpretation of the causal measure.}
Reporting results using different causal measures may lead to different conclusions or interpretations \citep{colnet2024riskratiooddsratio,Fay2024,Groenwold2011,Didelez2021}.
\citep{Fay2024} provides extended discussion on individual and population-level estimands and causal measures, such as the win proportion in our case or the risk-ratio that are not directly collapsible \citep{Fay2024,Groenwold2011,Didelez2021}.
However, for the risk ratio, the population-level estimand ($\esp{Y_i(1)}/\esp{Y_i(0)}$, instead of $\esp{Y_i(1)/Y_i(0)}$ for the individual-level risk ratio) can be interpreted in terms of treatment recommendations: $\esp{Y_i(1)}/\esp{Y_i(0)}=2$ means that treating everyone leads to an averaged outcome twice larger compared to treating no one.
In comparison-based estimands this is not the case: $\proba{Y_i(1)\succ Y_j(0)}=0.6$ does not mean that a majority of the population would benefit from being treated, as this depends on eventual population heterogeneity, for which we need to adjust for, which is precisely what our new estimand does. Next example illustrates this in the presence of unknown population stratas, on a synthetic example.

\paragraph{Hand's paradox for the Win Ratio \citep{Hand1992}.}
\Cref{remark:pop_vs_indiv} below is the estimand-wise version of \Cref{ex:counter_emp}.
It shows that the causal measures $\tau_\star$ and $\tau_\mathrm{pop}$ are not equivalent and may lead to different treatment recommendations if populations are heterogeneous.
In particular, \Cref{remark:pop_vs_indiv} justifies our preference and recommendations towards using $\tau_\star$ over $\tau_\mathrm{pop}$ if the population is heterogeneous.

\begin{example}\label{remark:pop_vs_indiv}
	Assume that outcomes are univariate ($Y_i(0),Y_i(1)\in\R$), and that $\cX=\cX_1\cup\cX_2$ with $\proba{X_i\in\cX_1}=1-\alpha$, $\proba{X_i\in\cX_2}=\alpha$ such that:
	\begin{equation*}
		Y_i(1)=y_1>y_0=Y_i(0)|X_i\in\cX_1\,,\quad Y_i(1)=y'_1<y'_0=Y_i(0)|X_i\in\cX_2\,,
	\end{equation*}
	almost surely.
	We then have, if $y_0'>y_1>y_0>y_1'$:
	\begin{equation*}
		\esp{w(Y^{(X_i)}(1)|Y_i(0))}=1-\alpha\,,\qquad \esp{w(Y_i(1)|Y_j(0))}=(1-\alpha)^2\,.
	\end{equation*}
	Thus, if $\frac{1}{2}>\alpha>1-\frac{1}{\sqrt{2}}$, we have:
	\begin{equation*}
		\esp{w(Y^{(X_i)}(1)|Y_i(0))}>\frac{1}{2}>\esp{w(Y_i(1)|Y_j(0))}\,,
	\end{equation*}
	leading to different conclusions in terms of treatment efficacy (see \Cref{fig:comp_WR_all_NN_adv}).
    This highlights the fact that Win Ratio or PNS measures favor treatments that benefit to the largest proportion of the population. Instead, the ATE with the RD takes into account the magnitude of the effect.
\end{example}

We highlight the fact that the phenomenon appearing in \Cref{remark:pop_vs_indiv} is \emph{not} reminiscent of Simpson's paradox \citep{simpson1951interpretation,wagner1982simpson}, as understood in the popular sense.
Simpson's paradox states a trend might appear in \emph{all} subgroups of a population, but still reverse when considering the average over all population.
Here, the paradox in \Cref{remark:pop_vs_indiv} is of a very different nature, since in both cases the average over the whole population is considered.
The difference lies in the way the average is taken: different causal measures might lead to different treatment effects.
Taking the average of individual effects over the global population (as done when considering $\tau_\star$) or comparing the whole treated group distribution with the control distribution (as done when considering $\tau_\mathrm{pop}$) can lead to opposite trends, as was previously noticed by \citep{Hand1992}.

\paragraph{Potential independence and interpretation as proxy measures.}
Under additional assumptions such as \emph{potential independence} (\emph{i.e.}, $Y_i(1)\indep Y_i(0)\,|\,X_i$), we have that $\tau_\star=\tau_\mathrm{indiv}$.
This assumption of potential independence is however quite strong and may be considered unlikely to hold true. 
It means that all that accounts for treatment effects is included in $X_i$, precluding \emph{e.g.} any unmeasured factor.		
Some examples (such as for instance comparing 2 different doses of a same treatment) make it impossible to assume conditional independence in general, hence the appeal of $\tau_\star$, since \Cref{def:win_indiv} does not need to make any assumption for $\tau_\star$ to be well-defined.

If we think of $\tau_\mathrm{pop}$ and $\tau_\star$ as a computable proxy\footnote{which is not necessarily why quantities like $\tau_\mathrm{pop}$ were originally introduced in the literature.} to approximate the ideal value $\tau_\mathrm{indiv}$ that cannot be approximated in general, we argue that, thanks to simple examples such as \Cref{remark:pop_vs_indiv}, $\tau_\star$ is a \emph{better} proxy than $\tau_\mathrm{pop}$, since it captures more information.
This is intuitively the case: both $(Y^{(X_i)}(1),Y_i(0))$ and $(Y_j(1),Y_i(0))$ are couplings of the random variables $Y_i(1)$ and $Y_i(0)$.
The coupling $(Y^{(X_i)}(1),Y_i(0))$ is however naturally closer to the coupling $(Y_i(1),Y_i(0))$ than the coupling $(Y_j(1),Y_i(0))$, since $(Y^{(X_i)}(1),Y_i(0))$ takes into account covariate effects, leading to:
\begin{equation*}
	d_{\ell^2}\big((Y^{(X_i)}(1),Y_i(0))\,,\,(Y_i(1),Y_i(0))\big)\leq d_{\ell^2}\big((Y_j(1),Y_i(0))\,,\,(Y_i(1),Y_i(0))\big)\,,
\end{equation*}
if the marginals are absolutely continuous with respect to the Lebesgue measure,
and where $d_{\ell^2}$ is the $\ell^2$ distance between densities.
Finally, next proposition formalizes the excess risk when using $\tau_\star$ or $\tau_\mathrm{pop}$ as proxis for $\tau_\mathrm{indiv}$.

\begin{proposition}
	We have:
	\begin{equation*}
		\left| \tau_\star - \tau_\mathrm{indiv}\right|\leq d_\mathrm{TV}(P_{Y^{(X_i)}(1),Y_i(0)},P_{Y_i(1),Y_i(0)})\,,
	\end{equation*}
	and
	\begin{equation*}
		\left| \tau_\mathrm{pop} - \tau_\mathrm{indiv}\right|\leq d_\mathrm{TV}(P_{Y_j(1),Y_i(0)},P_{Y_i(1),Y_i(0)})\,,
	\end{equation*}
	where $P_{Y^{(X_i)}(1),Y_i(0)},P_{Y_i(1),Y_i(0)},P_{Y_j(1),Y_i(0)}$ are respectively the joint distributions of $(Y^{(X_i)}(1),Y_i(0))$, $(Y_i(1),Y_i(0))$ and $(Y_i(1),Y_j(0))$, and $d_\mathrm{TV}$ is the total-variation distance between distributions.
	Furthermore, if the win function $w$ is $1-$Lipschitz, we have that:
	\begin{equation*}
		\left| \tau_\star - \tau_\mathrm{indiv}\right|\leq \cW_1(P_{Y^{(X_i)}(1),Y_i(0)},P_{Y_i(1),Y_i(0)})\,,
	\end{equation*}
	and
	\begin{equation*}
		\left| \tau_\mathrm{pop} - \tau_\mathrm{indiv}\right|\leq \cW_1(P_{Y_j(1),Y_i(0)},P_{Y_i(1),Y_i(0)})\,,
	\end{equation*}
	where $\cW_1$ is the $1-$Wasserstein distance between distributions.
\end{proposition}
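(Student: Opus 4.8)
The plan is to observe that each of the three estimands is the integral of the \emph{same} fixed test function $w$ against a joint law on $\cY\times\cY$, so that all four inequalities collapse to the standard fact that the difference of integrals of one test function against two probability measures is controlled by a distance between those measures. Concretely, abbreviate $P_1=P_{Y_i(1),Y_i(0)}$, $P_\star=P_{Y^{(X_i)}(1),Y_i(0)}$ and $P_\mathrm{pop}=P_{Y_j(1),Y_i(0)}$; the defining equations of $\tau_\mathrm{indiv}$, $\tau_\star$ and $\tau_\mathrm{pop}$ read $\tau_\mathrm{indiv}=\int w\,\dd P_1$, $\tau_\star=\int w\,\dd P_\star$ and $\tau_\mathrm{pop}=\int w\,\dd P_\mathrm{pop}$, where the integrand is identical in all three and ranges in $[0,1]$ (the three joint laws being well-defined by the preceding well-posedness remark). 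The two quantities to bound are thus $\tau_\star-\tau_\mathrm{indiv}=\int w\,\dd(P_\star-P_1)$ and $\tau_\mathrm{pop}-\tau_\mathrm{indiv}=\int w\,\dd(P_\mathrm{pop}-P_1)$.

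For the total-variation bounds, I would first use that $P_\star-P_1$ and $P_\mathrm{pop}-P_1$ are signed measures of total mass zero, so that adding a constant to the integrand does not change the integral; replacing $w$ by $w-\tfrac{1}{2}$ produces an integrand bounded by $\tfrac{1}{2}$ in absolute value, precisely because $w\in[0,1]$. Invoking the variational characterization $d_\mathrm{TV}(P,Q)=\sup_{\|f\|_\infty\le 1/2}\int f\,\dd(P-Q)$ (equivalently, Jordan-decomposing $P_\star-P_1$ and bounding its positive and negative parts, each of mass $d_\mathrm{TV}$, separately) then yields $|\tau_\star-\tau_\mathrm{indiv}|\le d_\mathrm{TV}(P_\star,P_1)$ and, identically, $|\tau_\mathrm{pop}-\tau_\mathrm{indiv}|\le d_\mathrm{TV}(P_\mathrm{pop},P_1)$. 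The only delicate point is the normalization: it is exactly the fact that $w$ takes values in an interval of length one that matches the unit constant in front of $d_\mathrm{TV}$.

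For the Wasserstein bounds, under the additional hypothesis that $w$ is $1$-Lipschitz on $\cY\times\cY$, I would apply Kantorovich--Rubinstein duality, $\cW_1(P,Q)=\sup_{\mathrm{Lip}(f)\le 1}\int f\,\dd(P-Q)$, to the admissible test function $f=w$; this immediately gives $|\tau_\star-\tau_\mathrm{indiv}|\le\cW_1(P_\star,P_1)$ and $|\tau_\mathrm{pop}-\tau_\mathrm{indiv}|\le\cW_1(P_\mathrm{pop},P_1)$. I expect the genuine work here to be bookkeeping rather than analysis: one must confirm that each $\tau$ is literally $\int w$ against the claimed coupling, in particular that $Y^{(X_i)}(1)$ is paired with $Y_i(0)$ exactly as encoded in $P_\star$, after which both families of inequalities follow at once from the dual descriptions of $d_\mathrm{TV}$ and $\cW_1$. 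Note that no assumption on the dependence structure of the potential outcomes is required, and the absolute-continuity remark in the surrounding text is used only for the heuristic $\ell^2$ comparison, not for these inequalities.
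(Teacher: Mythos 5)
Your proof is correct. The paper itself states this proposition without providing any proof (nothing for it appears in the appendix), so there is no authorial argument to compare against; your route --- writing each of $\tau_\mathrm{indiv}$, $\tau_\star$, $\tau_\mathrm{pop}$ as the integral of the same test function $w$ against three different couplings, then invoking the dual characterizations $d_\mathrm{TV}(P,Q)=\sup_{\|f\|_\infty\le 1/2}\int f\,\dd(P-Q)$ (after recentering $w\in[0,1]$ to $w-\tfrac12$, which is exactly the right normalization) and $\cW_1(P,Q)=\sup_{\mathrm{Lip}(f)\le 1}\int f\,\dd(P-Q)$ --- is the canonical argument and is what the authors evidently intend. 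The only point worth making explicit in a write-up is which metric on $\cY\times\cY$ the $1$-Lipschitz hypothesis and the $\cW_1$ distance refer to, since the paper leaves that unspecified.
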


\subsubsection{Dependence on the covariates, identifiability, and well-posedness}
\label{sec:misc_remarks}

\paragraph{Dependence of $\tau_\star$ over covariates:\textit{ which features should be included in $X$?}}
In order for $\tau_\star(x)$ and $\tau_\star$ to be identifiable, \Cref{hyp:unconfoundedness} must hold, requiring that all confounders must be included as covariates. We will here however require more than confounders.
\textit{Treatment effect modifiers} are features that have an influence on the treatment effect and that can change the value of $\tau_\star(x)$ if included. For instance, variables that can capture a hidden strata are treatment effect modifiers, as illustrated in \Cref{ex:counter_emp}.
In order for our matching between a patient $i$ and its independent copy conditioned on $X_i$ to be as meaningful as possible and for $\tau_\star$ to be as close to $\tau_\mathrm{indiv}$ as possible, we furthermore require treatment effect modifiers to also be included in the covariates, on top of confounders.
Note however that this is not restrctive and does not add more complexity than in classical treatment effect estimation: for the ATE with the RD, treatment effect modifiers can be included in the covariates in order to diminish variance of the estimates.

\paragraph{Identifiability under \Cref{hyp:sutva,hyp:positivity,hyp:unconfoundedness} and well-posedness of \Cref{def:win_indiv}.}
\Cref{eq:win_proba} can be identified, if it can be written as a function of the law $\cP_\mathrm{obs}$ of the observations $(X_i,T_i,Y_i)$.
Under SUTVA, this is the law of $(X_i,T_i,Y_i(T_i))$, while under unconfoundedness, $\dd\P(Y_i(t)=y|X_i)=\dd\P_\mathrm{obs}(Y_i=y|X_i,T_i=t)$.
Thus, since $\tau_\star(x)$ writes as:
\begin{equation*}
    \tau_\star(x) = \int w(y|y') \dd\P(Y_i(1)=y|X_i=x)\dd\P(Y_i(0)=y'|X_i=x)\,,
\end{equation*}
it can be expressed as a function of the distribution of the observations:
\begin{equation}\label{eq:id_tau_star}
    \tau_\star(x) = \int w(y|y') \dd\P_\mathrm{obs}(Y_i=y|X_i=x,T_i=1)\dd\P_\mathrm{obs}(Y_i=y'|X_i=x,T_i=0)\,.
\end{equation}
We thus have identifiability of $\tau_\star(x)$ and $\tau_\star$ through $\tau_\star=\esp{\tau_\star(X_i)}$.
Furthermore, a potential more intuitive way to understand $\tau_\star$ and $\tau_\star(x)$, is to write them, with a slight abuse of notations, as, where $i,j$ are two independent patients:
\begin{equation*}
    \tau_\star(x) = \esp{w(Y_i|Y_j)|X_i=X_j=x,T_i=1,T_j=0}\,,\quad \text{and}\quad \tau_\star = \esp{\esp{w(Y_i|Y_j)|X_i=X_j,X_i,T_i=1,T_j=0}}\,.
\end{equation*}
Note that $X_i=X_j$ is an event of mass equal to 0, so that conditioning on it requires some work as this quantity is not always well defined, hence our \Cref{def:win_indiv} that does not use this conditioning. 
However, this expression with the $X_i=X_j$ conditioning emphasizes the fact that our estimand indeed captures and extreme form of stratification.

Let us now briefly expose why the definition of $\tau_\star$ is always well-posed under no assumption.
Let $(E,\cE)$ and $(F,\cF)$ be two probability spaces.
$\nu:E\times \cF\to [0,1]$ is a transition kernel if it satisfies $\forall x\in E\,,\,\nu(x,\cdot)$ is a probability measure on $(F,\cF)$ and $\forall B\in\cF\,,\, \nu(\cdot,B)$ is $\cE-$measurable.
If $X\in \R^p$ and $Y\in\R^d$, the conditinal law of $Y$ given $X$ is a kernel $\nu$ on $(\R^d\times \cB(\R^p))$ that satisfies:
\begin{equation*}
    \P_{(X,Y)}=\P_{X\cdot \nu}\,,\quad \text{i.e.}\quad \forall (A,B)\in\cB(\R^p)\times \cB(\R^d)\,,\quad \proba{X\in A,Y\in B}=\int_{x\in A}\nu(x,B)\P_X(\dd x)\,.
\end{equation*}
Such a transition kernel always exists: this is Miloslav Jiřina's Theorem \citep{jirina} for Borelian measures and random variables.
We write 
\begin{equation*}
    \proba{Y\in B|X=x}= \nu(x,B)\,.
\end{equation*}
To build two independent copies of $Y$ given $X=x$, we thus draw $Y_x,Y'_x$ with:
\begin{equation*}
    \forall (B,B')\in\cB(\R^d)^2\,,\quad\proba{Y_x\in B\,,\, Y_x'\in B'}=\nu(x,B)\nu(x,B')\,.
\end{equation*} 
We thus have created a copy of $Y$ that satisfies the two following properties:
\emph{(i)} it is independent of $Y$ conditionally on $X$;
\emph{(ii)} conditionally on $X$, the distributions of $Y'$ and $Y$ are the same.
Their joint distribution with $X$ writes as:
\begin{equation*}
    \proba{X\in A\,,\, Y\in B\,,\,Y^{(X)}\in B'}=\proba{X\in A}\esp{\nu(X,B)\nu(X,B')|X\in A}\,,
\end{equation*}
for all $(A,B,B')\in\cB(\R^p)\times \cB(\R^d)^2$.

\section{Matching Strategies for Win Ratio estimation}
\label{sec:matchings}

\subsection{Consistency of traditional Win Ratio, Net-Benefit and Win Proportions in the RCT setting}
\label{sec:RCT}

In this section, we study $\hat p_\mathrm{W}$ in light of the causal measures we previously defined.
As expected from \Cref{ex:counter_emp,remark:pop_vs_indiv}, the behavior of $\hat p_\mathrm{W}$ crucially depends on the pairings considered.
The following theorem shows that, in a Randomized Controlled Trial (RCT) setting, pairing approaches naturally lead to consistent estimators for the estimands $\tau=\tau_\star$ and $\tau=\tau_\mathrm{pop}$.
For a Nearest Neighbor pairing choice as in \Cref{eq:pair_nn}, averaging results obtained from all comparisons lead to estimators of $\tau_\star$ that write as
\begin{equation}\label{eq:estim_NN_01}
    \hat \tau_{NN}^{(1)} = \frac{1}{|\cN_1|} \sum_{i\in\cN_1} w(Y_i|Y_{\sigma^\star_1(i)})\,,\quad \quad \hat \tau_{NN}^{(0)} = \frac{1}{|\cN_0|} \sum_{j\in\cN_0} w(Y_{\sigma^\star_0(j)}|Y_j)\,,
\end{equation}
or as 
\begin{equation}\label{eq:estim_NN}
    \hat\tau_\mathrm{NN} = \frac{1}{n}\sum_{i=1}^n T_i w(Y_i|Y_{\sigma_1^\star(i)}) + (1-T_i) w(Y_{\sigma_0^\star}(i)|Y_i)\,,
\end{equation}
where $\sigma_t^\star : \cN_t\to \cN_{1-t}$ assign the closest element from the other group, as defined in \Cref{eq:pair_nn}.
Instead, for a complete pairing choice $\cC=\cN_1\times\cN_0$, averaging all comparisons leads to a consistent estimator of $\tau_\mathrm{pop}$ that writes as:
\begin{equation}\label{eq:estim_pair_complete}
    \hat \tau_\mathrm{Tot} = \frac{1}{|\cN_1||\cN_0|}\sum_{(i,j)\in\cN_1\times\cN_0}w(Y_i|Y_j)\,.
\end{equation}
Note that the complete pairing part of the following theorem is already a well-known result \citep{Mao2017}, as the corresponding estimator is a classical $U-$statistics. We still include its proof for completeness.

\begin{theorem}[Consistency of Win Ratio, RCT]\label{thm:consistency_WR}
	Assume that \Cref{hyp:sutva,hyp:unconfoundedness,hyp:positivity} hold, and assume further that we are in the RCT setting: $T_i\indep X_i$.
	\begin{enumerate}
		\item \label{thm:consistency_WR_Complete} \emph{Complete pairing.}
	We have:
	\begin{equation*}
		\hat \tau_\mathrm{Tot} \underset{\P}{\longrightarrow} \tau_\mathrm{pop}= \esp{w(Y_i(1)|Y_j(0))}\,,\quad i\ne j\,,
	\end{equation*}
	where the limit in probability is taken as $n_0,n_1\to \infty$ and $\cC_\mathrm{Tot}=\cN_0\times \cN_1$.

		\item \label{thm:consistency_WR_NN} \emph{Nearest Neighbor.} Assume that $(x,y)\mapsto \esp{w(Y_i(1)|y)|X_i=x}$ is continuous in its first variable $x$ 
		and that
		$\cX$ is compact.
        Let $\hat\tau$ be defined as either $\hat\tau_\mathrm{NN}$ (\Cref{eq:estim_NN}) or $\hat\tau^{(t)}$ for $t\in\set{0,1}$ (\Cref{eq:estim_NN_01}).
        We have:
		\begin{equation*}
			\hat \tau\underset{\P}{\longrightarrow} \tau_\star = \esp{\esp{w(Y^{(X_i)}(1)|Y_i(0))|X_i}}\,,
		\end{equation*}
		where the limit in probability is taken as $n_0,n_1\to \infty$.

	\end{enumerate}
	
\end{theorem}

The consistency results of \Cref{thm:consistency_WR} formalize the intuition brought by \Cref{ex:counter_emp,remark:pop_vs_indiv} and illustrated in \Cref{fig:comp_WR_all_NN_adv}.
The choice of pairing $\cC$ is crucial, and leads to the estimation of very different quantities and may lead to different 
treatment recommendations.
For complete pairings $\cC_\mathrm{Tot}$, the causal estimand that is estimated is the population-level one, $\tau_\mathrm{pop}$.
For Nearest Neighbor pairings $\cC_\mathrm{NN}$, the causal estimand that is estimated is the individual-level one that we introduced, $\tau_\star$, defined in \Cref{def:win_indiv}.

As formalized in \Cref{rem:strata}, forming risk strata may be an interesting strategy, that lies in-between the two extreme choices $\cC_\mathrm{Tot}$ and $\cC_\mathrm{NN}$ for pairing sets.
The causal estimand related to a strata function can also be defined, and we expect results similar to those of \Cref{thm:consistency_WR} to hold.
We argue that the individual-level causal estimand $\tau_\star$ can in fact be recovered using infinitesimal stratas, further justifying the strength of \Cref{def:win_indiv}.

\begin{remark}[Win Ratio and comparisons with strata]\label{rem:strata}
	The following notion of strata could be developed further, as an in-between the extreme cases presented in \Cref{thm:consistency_WR}.
	Let $r:\cX\to\cR$ be a risk strata function, on some metric space $(\cR,d)$. A classical risk strata function is the \textit{baseline risk} $r(x)=\esp{Y_i(0)|X_i=x}$.
	Using this strata, the following pairs of treated and control patients:
	\begin{equation*}
		\cC_{\hat r,\eps} = \set{(i,j)\in\cN_0\times \cN_1\,|\, d(r(X_i),r(X_j))\leq \eps}\,,
	\end{equation*}
	to obtain the estimator $\hat p_\mathrm{W}$ (\Cref{eq:win_prop}) using these pairs.
	Now, define the $\eps-$strata causal estimand as: 
	\begin{equation*}
		\tau_\star^{r,\eps}= \proba{Y_i(1)\succ Y_j(0)|d(r(X_i),r(X_j))\leq \eps}\,,
	\end{equation*}
	where $i$ and $j$ are two different and independent indices.
	Under adequate assumptions, if $\cC=\cC_{\hat r,\eps}$ we expect to have that:
	\begin{equation*}
		\hat\tau_\mathrm{strata}({\hat r,\eps}) \underset{\P}{\longrightarrow}\tau_\star^{r,\eps}\,,
	\end{equation*}
    where
    \begin{equation*}
        \hat\tau_\mathrm{strata}({\hat r,\eps})= \frac{1}{|\cC_{\hat r,\eps}|}\sum_{(i,j)\in\cC_{\hat r,\eps}} w(Y_i|Y_j)\,.
    \end{equation*}
	Furthermore, our causal measure $\tau_\star$ can in fact be seen as the limit of $\tau^{r,\eps}_\star$ for $\eps\to0$.
	However, this limit may not always be well-defined, hence the strength of \Cref{def:win_indiv}.
\end{remark}

It is worth noting that using Nearest Neighbor pairings to estimate $\tau_\star$ also has limitations.
First, the method is confined to the \emph{RCT} setting: the consistency result in \Cref{thm:consistency_WR}.\ref{thm:consistency_WR_NN} relies on the independence of $T_i$ and $X_i$. A natural extension is therefore to move beyond the RCT framework. In \Cref{sec:obs_nn}, we show—following \citep{abadie2006large}—that nearest-neighbor matching extends naturally to observational data.
Second, the approach suffers from the curse of dimensionality: its convergence rate deteriorates exponentially with the dimension of the covariates $X_i$, a well-documented weakness of nearest-neighbor methods \citep{biau2015lectures}. Although extensively studied, these methods become ineffective in high dimensions unless $n$ grows exponentially.
These limitations motivate the search for a more general and robust procedure to estimate $\tau_\star$. In \Cref{sec:distributional_regression}, we adopt a distributional-regression perspective and develop more scalable methodologies for estimating $\tau_\star$, which also accommodate missing covariate values—unlike nearest-neighbor approaches.

\subsection{The matching approach in observational settings}
\label{sec:obs_nn}

We now extend our consistency results to the observational setting.
A natural approach is to resort to \emph{Inverse Propensity Weighting} (IPW) \citep{robins1994estimation,horvitz1952generalization}.
Recall that for $x\in\cX$, the propensity score $\pi(x)=\proba{T_i=1,|,X_i=x}$ (see \Cref{hyp:positivity}) gives the conditional probability of treatment. Given an estimate $\hat\pi$ of $\pi$, 
 we adapt the nearest-neighbor estimator by incorporating inverse propensity weights, as detailed in \Cref{app:ipw} to recover consistency under \Cref{hyp:unconfoundedness}, rather than the stronger assumption $T_i!\indep! X_i$.

However, noting that $\tau_\mathrm{NN}^{(t)}$ for $t=0$ and $t=1$ are biased in a known way in that they estimate, respectively, average effects on the control and treated, the nearest neighbor approach actually does not need to be reweighted, using:
\begin{equation*}
    \esp{\tau_\star(X_i)} = \proba{T_i=1} \esp{\tau_\star(X_i)|T_i=1} + \proba{T_i=0} \esp{\tau_\star(X_i)|T_i=0}\,,
\end{equation*}
leading to the consistency of $\hat\tau_\mathrm{NN}$ defined in \Cref{eq:estim_NN}.
The following theorem establishes consistency of $\hat\tau_\mathrm{NN}$ in the observational setting, using the fact that $\hat\tau_\mathrm{NN}=\frac{|\cN_1|}{n}\hat\tau_\mathrm{NN}^{(1)} + \frac{|\cN_0|}{n}\hat\tau_\mathrm{NN}^{(0)} $.
It is a generalization of popular matching methods \citep{abadie2006large} that naturally balance treated and control populations without need for nuisance parameter estimation, beyond the ATE with the risk difference that they consider.

\begin{theorem}[Nearest neighbors, observational]\label{thm:consistency_WR_observational}
Assume that \Cref{hyp:sutva,hyp:positivity,hyp:unconfoundedness} hold (SUTVA, positivity and no unobserved confounders), $\cX$ is compact, and that $(x,y)\mapsto \esp{w(Y_i(1)|y)|X_i=x}$ and $(x,y)\mapsto \esp{w(y|Y_i(0))|X_i=x}$ are continuous in $x$.
Then, for both $t=0$ and $t=1$, we have:
\begin{equation*}
    \hat\tau_\mathrm{NN}^{(t)} \underset{\P}{\longrightarrow} \esp{\tau_\star(X_i)|T_i=t}\,,
\end{equation*}
and
\begin{equation*}
    \hat\tau_\mathrm{NN} \underset{\P}{\longrightarrow} \tau_\star\,.
\end{equation*}
\end{theorem}

Note that this result is quite strong: a simple pairing approach works beyond RCT, as traditionally done in hierarchical outcome Win Ratio analysis, in contrast with inverse propensity weighting estimators proposed by \citep{Mao2017} and by many subsequent works \citep{Chen2024,Chiaruttini2024,Guo2022,Yin2022,zhang2022causalinferencewinratio}. However, these works estimate the population-level estimand $\tau_\mathrm{pop}$, that indeed requires IPW approaches to reweight pairs.

The matching estimator in \Cref{eq:estim_NN} can also be framed as a simple plug-in estimator.
Indeed, $\tau_\star$ writes as $\tau_\star=\esp{\tau_\star(X_i)}$.
Thus, an intuitive approach to estimate $\tau_\star$ would be to estimate the nuisance parameter function $x\mapsto \tau_\star(x)$, where $\tau_\star(x)$ is the win probability of a treated patient with features $x$, over a control patient with same features $x$.
Denoting as $\hat p_1$ an estimator of the nuisance function $x\mapsto\tau_\star(x)$, a plug-in estimator would write as:
\begin{equation*}
    \tau_\mathrm{plug-in}=\frac{1}{n}\sum_{i=1}^n \hat p_1(X_i)\,.
\end{equation*}
Then, as long as $\esp{\hat p_1(X_i)|X_i=x}\to \tau_\star(x)$, this estimator would be consistent.
This is exactly what we proved in \Cref{thm:consistency_WR_observational}, for 
\begin{equation}\label{eq:p_1}
    \hat p_1(X_i) = T_i w(Y_i|Y_{\sigma_1(X_i)}) + (1-T_i)w(Y_{\sigma_0(X_i)}|Y_i)\,.
\end{equation}
Note that this nearest neighbor estimator is not pointwise consistent. However, as just mentioned, we have $\esp{\hat p_1(X_i)|X_i=x}\to p_1(x)$ as $n$ grows which is sufficient for the plug-in estimator to be consistent. To have a pointwise consistent estimator (without the conditional mean), we would instead require to use $k-$nearest neighbors, with $1\lll k \lll \log(n)$. However, increasing $k$ can lead to much larger asymptotic biases \citep{abadie2006large}, hence our choice not to do so. Another approach to estimate the nuisance function $x\mapsto \tau_\star(x)$ would also be to train a (non-)parametric regressor to learn $(x,x')\mapsto \esp{w(Y_i|Y_j)|X_i=x,X_j=x',T_i=1,T_j=0}$, using samples $(a_e,b_e)$ for $e\in\cN_1\times \cN_0$, $a_{(i,j)}=(X_i,X_j)$ and $b_{(i,j)}=w(Y_i|Y_j)$ the corresponding label. We do not pursue this approach here, since we do not require pointwise consistency; it has however been considered by \citep{petit2025optimal} in the context of policy learning.

The strength of \Cref{thm:consistency_WR_observational} and of this nearest neighbor approach is that the same estimator can be used in both RCT and observational settings, pushing further towards hierarchical outcome and Win Ratio analyzes on real-world observational data, by highlighting the robustness of our matching approach.

\section{Estimation \textit{via} Nuisance Parameters Estimation}\label{sec:nuisance}

In this section, we introduce new methods to approximate $\tau_\star$ beyond the matching approach previously considered.
Our approach significantly departs from the existing Win Ratio literature, in that \textit{(i)} different nuisance parameters than the one involved in the definition of $\tau_\star$ (\Cref{def:win_indiv}) are introduced, and \textit{(ii)} we use Machine Learning tools to approach this nuisance parameter, instead of matching approaches.
This estimation involves performing distributional regression, which we do using distributional random forests. The strength of this new proposed plug-in distributional regression approach (\Cref{sec:distributional_regression}) lies in the fact that it can directly handle missing values in the covariates and bypasses the slowly converging non-asymptotic bias of matching approaches.

Tackling the estimation of $\tau_\star$ by first estimating nuisance parameter functions naturally leads to consider the semi-parametric efficiency of the approach. We thus introduce in \Cref{sec:EIF} a 1-step corrected estimator of the plug-in approach, that corrects the plug-in bias of the matching approach and that comes from computing the efficient influence function (EIF) of the estimand $\tau_\star$. This estimator inherits the properties of the semi-parametric literature (minimal variance amongst a certain class of estimators).

\Cref{sec:distributional_regression_ex} finally exposes how we perform the estimation of the new nuisance parameters we introduce next, via distributional random forests \citep{cevid2022distributional}.

\subsection{Distributional Regression Approach}\label{sec:distributional_regression}

The distributional regression approach is the counterpart of the \emph{Two-learners} or \textit{Plug-in G-formula} approaches, used for (C)ATE estimation with the risk difference.
When estimating 
\[\tau_\mathrm{RD}=\esp{\esp{Y_i|X_i,T_i=1}-\esp{Y_i|X_i,T_i=0}}\,,\] 
with a regression approach, the idea is to learn two (non)parametric estimates
$\hat \mu_t:\cX\to \R$ of $\mu_t:x\mapsto \esp{Y_i|T_i=t,X_i=x}$,
to approximate $\tau_\mathrm{RD}$ with: 
\[\frac{1}{n}\sum_{i=1}^n\hat\mu_1(X_i)-\hat\mu_0(X_i)\,.\]
A naive adaptation of this regression approach to our problem of estimating $\tau_\star$ would thus be to first learn $x\mapsto \hat \mu_t(x)$ (non)parametric estimates of $\mu_t:x\mapsto\esp{Y_i|X_i=x,T_i=t}$ as before.
Then, let $\hat p(X_i) = w(\hat y_1(X_i)|\hat y_0(X_i))$, that aims at estimating $\esp{w(Y_i(1)| Y_i(0))|X_i}$, to obtain the estimator $\frac{1}{n}\sum_{i=1}^n \hat p(X_i)$.
However, as opposed to (C)ATE estimation, we don't have linearity of $w$ here, so that even if the conditional expectations are perfectly estimated ($\hat \mu_t=\mu_t$), we won't even have consistency.
Regressing the conditional expectations makes us loose information on the way.

Hence the \emph{ distributional regression approach}, since we need to go beyond learning conditional expectations.
We start from the identification formula of $\tau_\star(x)$ (\Cref{eq:id_tau_star}). We have, by considering a treated individual:
\begin{align*}
    \tau_\star(x) &= \int w(y|y') \dd\P_\mathrm{obs}(Y_i=y|X_i=x,T_i=1)\dd\P_\mathrm{obs}(Y_i=y'|X_i=x,T_i=0)\\
    &= \esp{\int w(Y_i|y') \dd\P_\mathrm{obs}(Y_i=y'|X_i=x,T_i=0)  \Big|X_i=x,T_i=1}\\
     &= \int \esp{w(Y_i|y')\Big|X_i=x,T_i=1} \dd\P_\mathrm{obs}(Y_i=y'|X_i=x,T_i=0)\\
     &\qquad\qquad\qquad \text{(switching conditional mean and integration)}  \\
     &= \esp{q_0(x,Y_j)|T_j=0,X_j=x}\,,
\end{align*}
where 
\begin{equation*}
    q_0(x,y)=\esp{w(y|Y_i)|T_i=0,X_i=x}\,.
\end{equation*}
Similarly, $\tau_\star(x)=\esp{q_1(Y_i,x)|T_i=1,X_i=x}$ where:
\begin{equation*}
    q_1(x,y)=\esp{w(Y_i| y)|T_i=1,X_i=x}\,,
\end{equation*}
so that $\tau_\star(X_i) = \esp{T_i q_1(X_i,Y_i)+(1-T_i)q_0(X_i,Y_i)\Big|X_i,T_i}$, leading to:
\begin{equation*}
    \tau_\star=\esp{T_i q_1(X_i,Y_i)+(1-T_i)q_0(X_i,Y_i)}\,.
\end{equation*}
A natural plug-in estimator thus consists in estimating $\hat q_1,\hat q_0$ estimators of the nuisance parameters $q_1$ and $q_0$, learned on an independent dataset or split.
Let thus:
\begin{equation}\label{eq:DRF}
	\hat \tau_\mathrm{reg}= \frac{1}{n}\sum_{i=1}^n (1-T_i)\hat q_0(X_i,Y_i) + T_i\hat q_1(X_i,Y_i)\,,
\end{equation}
be the \textit{distributional regression estimator}.
Provided that the estimation errors tend to zero (in mean over the population), we prove in the next Theorem that $\hat\tau_\mathrm{reg}$ is consistent, and even asymptotic normal if nuisance parameters are sufficiently well estimated.

\begin{theorem}[Consistency and asymptotic normality]
	\label{thm:WR_DRF}
	Assume that \Cref{hyp:sutva,hyp:unconfoundedness,hyp:positivity} hold.
	We have the followings.
	\begin{enumerate}
		\item \label{thm:WR_DRF_consistency}
		Assume that for $t\in\set{0,1}$, we have:
		\begin{equation*}
			\esp{|\hat q_t(X_i,Y_i)-q_t(X_i,Y_i)|\,\big |\, T_i=1-t}\to 0\,.
		\end{equation*}
		Then, the estimator $\hat\tau_\mathrm{reg}$ defined in \Cref{eq:DRF} converges almost surely to $\tau_\star$ (defined in \Cref{def:win_indiv}).
		\item \label{thm:WR_DRF_normal}
		 Assume that for $t\in\set{0,1}$, we have:
		\begin{equation*}
			\esp{|\hat q_t(X_i,Y_i)-q_t(X_i,Y_i)|\,\big |\, T_i=1-t}=o\big(n^{-1/2}\big)\,.
		\end{equation*}
		Then, $\hat\tau_\mathrm{reg}$ satisfies:
		\begin{equation*}
			\sqrt{n}\big(\hat \tau_\mathrm{reg}-\tau_\star  \big)\underset{\P}{\longrightarrow} \cN(0,\sigma_\infty^2)\,,
		\end{equation*} 
		where
\begin{align*}
    \sigma_\infty^2&=\proba{T_i=1}\var(\esp{w(Y_i(1), Y_i(0))|X_i,Y_i(0),T_i=1})\\
    &\quad+\proba{T_i=0}\var(\esp{w(Y_i(1), Y_i(0))|X_i,Y_i(1),T_i=0})
\end{align*}
	    is the variance of the probability of a win conditioned on covariates, treatment, and counterfactual.
	\end{enumerate}
\end{theorem}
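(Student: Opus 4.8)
The plan is to separate $\hat\tau_\mathrm{reg}$ into an \emph{oracle} average built from the true conditional quantities $q_0,q_1$ and a remainder driven by the estimation error $\hat q_t-q_t$. Set $Z_i\eqdef(1-T_i)q_1(X_i,Y_i)+T_iq_0(X_i,Y_i)$ and $\tilde\tau_\mathrm{reg}\eqdef\frac1n\sum_{i=1}^nZ_i$, so that $\hat\tau_\mathrm{reg}=\tilde\tau_\mathrm{reg}+R_n$ with $R_n=\frac1n\sum_{i=1}^n(1-T_i)(\hat q_1-q_1)(X_i,Y_i)+T_i(\hat q_0-q_0)(X_i,Y_i)$. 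The variables $Z_i$ are i.i.d.\ and bounded in $[0,1]$ (because $w$, hence each $q_t$, takes values in $[0,1]$), and by the unbiasedness computation preceding this theorem they satisfy $\E[Z_i]=\tau_\star$ under \Cref{hyp:sutva,hyp:unconfoundedness}.

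For \Cref{thm:WR_DRF_consistency} I would first apply the strong law of large numbers to the i.i.d.\ integrable family $(Z_i)$ to obtain $\tilde\tau_\mathrm{reg}\to\tau_\star$ almost surely. It then suffices to show $R_n\to0$. Bounding $|R_n|\le\frac1n\sum_{i=1}^n\one_{\{T_i=0\}}|\hat q_1-q_1|(X_i,Y_i)+\one_{\{T_i=1\}}|\hat q_0-q_0|(X_i,Y_i)$ and taking expectations, using as in \Cref{sec:obs_IPW} that $\hat q_t$ is fitted on an independent sample, gives $\E|R_n|\le\proba{T_i=0}\,\E[|\hat q_1-q_1|(X_i,Y_i)\mid T_i=0]+\proba{T_i=1}\,\E[|\hat q_0-q_0|(X_i,Y_i)\mid T_i=1]$, which vanishes by assumption; hence $R_n\to0$ in $L^1$ and in probability. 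Combining the two pieces yields consistency.

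For \Cref{thm:WR_DRF_normal} I would write $\sqrt n(\hat\tau_\mathrm{reg}-\tau_\star)=\sqrt n(\tilde\tau_\mathrm{reg}-\tau_\star)+\sqrt nR_n$. The classical CLT for the i.i.d.\ family $(Z_i)$, whose variance is finite by boundedness, gives $\sqrt n(\tilde\tau_\mathrm{reg}-\tau_\star)\to\cN(0,\var(Z_i))$ in distribution. The sharpened hypothesis controls the remainder: $\E|\sqrt nR_n|\le\sqrt n(\proba{T_i=0}\E[|\hat q_1-q_1|\mid T_i=0]+\proba{T_i=1}\E[|\hat q_0-q_0|\mid T_i=1])=o(1)$, so $\sqrt nR_n\to0$ in probability and Slutsky's lemma delivers the stated limit. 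It remains to identify $\var(Z_i)$ with $\sigma_\infty^2$, which I would do by the law of total variance conditioned on $T_i$: $\var(Z_i)=\proba{T_i=0}\var(q_1(X_i,Y_i)\mid T_i=0)+\proba{T_i=1}\var(q_0(X_i,Y_i)\mid T_i=1)+\var(\E[Z_i\mid T_i])$, recognizing each conditional variance as that of the win probability given covariates, treatment and the relevant potential outcome, and noting that under \Cref{hyp:unconfoundedness} the between-group term $\var(\E[Z_i\mid T_i])$ reduces to the two-term expression appearing in $\sigma_\infty^2$.

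The main obstacle is the treatment of the remainder $R_n$: since $\hat q_t$ depends on $n$, $(R_n)$ is a triangular array for which the strong law does not apply directly, so the almost-sure conclusion of \Cref{thm:WR_DRF_consistency} requires either a convergence rate combined with a Borel--Cantelli argument or interpreting the error hypothesis as an almost-sure statement along a fixed consistent sequence $\hat q_t$; the convergence in probability, which is all that \Cref{thm:WR_DRF_normal} needs, follows cleanly from the $L^1$ bound above. The second delicate point is the bookkeeping in the variance identification, where one must keep careful track of which potential outcome is observed versus integrated out in each treatment group so that the two conditional-variance contributions match the two terms of $\sigma_\infty^2$.
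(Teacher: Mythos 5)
Your decomposition into the oracle average of $Z_i=(1-T_i)q_1(X_i,Y_i)+T_iq_0(X_i,Y_i)$ plus a remainder, handled by the SLLN/CLT for the $Z_i$ and an $L^1$ bound on the remainder (with Slutsky for part 2), is exactly the paper's argument, and your observation that the almost-sure claim for the remainder is delicate because $\hat q_t$ depends on $n$ is a fair point that the paper's own proof glosses over. One small correction to your variance bookkeeping: in the law of total variance the between-group term $\var(\E[Z_i\mid T_i])$ \emph{vanishes} (both conditional means equal $\tau_\star$ by the unbiasedness computation), and it is the within-group term $\E[\var(Z_i\mid T_i)]$ that produces the two summands of $\sigma_\infty^2$, rather than the between-group term "reducing to" them as you wrote.
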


The assumption of \Cref{thm:WR_DRF}.\ref{thm:WR_DRF_consistency} will hold, as long as the distributional regressors are consistent. This will for instance be the case of Distributional Random Forests \citep{cevid2022distributional, pmlr-v238-benard24a_DRF} that we use (defined and explained further in \Cref{sec:distributional_regression_ex}), under very mild assumptions.
The assumption of \Cref{thm:WR_DRF}.\ref{thm:WR_DRF_normal} is much stronger, and requires a fast parametric rate of convergence. It will hold if for instance we perform logistic regression (\Cref{sec:distributional_regression_ex}) for a well-specified distributional regression problem.
Under such assumptions, the asymptotic normality yields asymptotically valid confidence intervals.

\subsection{Semi-parametric efficient estimator}\label{sec:EIF}

When plugging-in nuisance functions estimators as in \Cref{eq:DRF}, if the models have large non-asymptotic biases (as for nearest-neighbors), if they overfit (as for forests) or if they are mispecified (as for parametric models), this can lead to some bias in the estimated value.
Efficient Influence Functions (EIFs) can be used to correct for this plug-in bias \citep{hines2022demystifying}, and to obtain efficient estimators that satisfy, under mild assumptions, double robustness properties, semi-parametric efficiency optimality and minimal asymptotic variance amongst certain class of estimators.

Let $\Psi(\cP)$ be an estimand we wish to estimate, that is the function of an observed distribution $\cP$.
In our case, $\cP$ is the distribution of our $n$ observations $O_i=(X_i,T_i,Y_i)$, and 
\begin{equation*}
    \Psi(\cP) = \esp{\tau_\star(X_i)} = \esp{\esp{w(Y^{(X_i)}(1)|Y_i) | X_i,T_i=0}}\,,
\end{equation*}
as defined in \Cref{def:win_indiv}.
As we saw in the previous subsection, we can estimate $\Psi(\cP)$ by learning (non-parametric) nuisance estimates (with distributional regression here) on the data.
However, small perturbations and errors in these nuisance estimations can then lead to large biases in our estimate of $\Psi$ ($\tau_\star$ here).
The \emph{plug-in} estimator defined in the previous subsections writes as:
\begin{equation*}
	\Psi_\mathrm{plug-in} = \Psi(\hat\cP_n)\,,
\end{equation*}
where $\hat\cP_n$ is an estimate of the distribution $\cP$ on $n$ samples (empirical distribution, and estimation of nuisance parameters).
Denoting as $\phi(\cP)$ the Efficient Influence Function of $\Psi$ at $\cD$, the 1-step estimator corrects first-order biases that come from wrongly estimating the nuisance parameters of the distribution:
\begin{equation*}
	\Psi_\mathrm{1-step}= \Psi_\mathrm{plug-in} + \frac{1}{n}\sum_{i=1}^n \Phi(\hat \cP_n)(O_i)\,.
\end{equation*}
We thus need to compute the EIF $\phi(\cP)(o)$ of $\Psi(\cP)=\tau_\star$ for any distribution $\cP$ and observation $o$.

\begin{proposition}[EIF]
\label{prop:EIF}
	The EIF of $\Psi$ writes as, for $o=(X,T,Y)$:
		\begin{align*}
			\phi(\cP)(o) &= -\Psi(\cP) + \esp{w(Y_i(1)|Y^{(X_i)}(0))|X_i=X}\\
			& \quad + \frac{T}{\pi(X)} \left( \esp{w(Y|Y^{(X)}(0))} - \esp{w(Y^{(X_i)}(1)|Y^{(X_i)}(0))|X_i=X} \right)\\
			& \quad + \frac{1-T}{1-\pi(X)} \left( \esp{w(Y^{(X)}(1)|Y)} - \esp{w(Y^{(X_i)}(1)|Y^{(X_i)}(0))|X_i=X} \right) \,.
		\end{align*}
\end{proposition}

Let $\hat \pi$ be an estimator of the propensity score, $\hat q_1,\hat q_0$ as in \Cref{sec:distributional_regression} be the distribution regression estimators, and $\hat p_1$ be an estimator of the nuisance parameter $x\mapsto \tau_\star(x)$ (win probability conditioned on covariates).
The efficient influence function of $\psi$ then writes as, for a distribution $\cP$ and observation $o=(X,T,Y)$:
\begin{equation}\label{eq:EIF}
	\Phi(\cP)(o) = -\Psi(\cP) +  p_1(X) + \frac{T}{ \pi(X)}\left(q_1(X,Y)-p_1(X)) \right) + \frac{1-T}{1- \pi(X)}\left(q_0(X,Y)-p_1(X)\right)\,.
\end{equation}

\paragraph{1-step estimator.}
The 1-step estimator $ \Psi_\mathrm{plug-in} + \frac{1}{n}\sum_{i=1}^n \Phi(\hat \cP_n)(O_i)$ then writes as:
\begin{equation}\label{eq:1-step}
\begin{aligned}
	\hat\tau_\mathrm{1-step} & = \frac{1}{n}\sum_{i=1}^n \Big[ \hat p_1(X_i) + \frac{T_i}{\hat \pi(X_i)}\left(\hat q_1(X_i,Y_i)- \hat p_1(X_i)\right) + \frac{1-T_i}{1- \hat\pi(X_i)}\left(\hat q_0(X_i,Y_i)-\hat p_1(X_i)\right) \Big]\,,
\end{aligned}
\end{equation}
obtained by plugging in nuisance estimators in \Cref{eq:EIF} for $p_1,q_0,q_1$.
For $\hat p_1$ as in \Cref{eq:p_1} (nearest neighbor), this leads to:
\begin{equation*}
\begin{aligned}
	\hat\tau_\mathrm{1-step} & = \frac{1}{n}\sum_{i=1}^n \Big[ T_i w(Y_i|Y_{\sigma_1(X_i)}) + (1-T_i)w(Y_{\sigma_0(X_i)}|Y_i)\\
    &\quad + \frac{T_i}{\hat \pi(X_i)}\left(\hat q_1(X_i,Y_i)- w(Y_i|Y_{\sigma_1(X_i)})\right) + \frac{1-T_i}{1- \hat\pi(X_i)}\left(\hat q_0(X_i,Y_i)-w(Y_{\sigma_0(X_i)}|Y_i)\right) \Big]\,.
\end{aligned}
\end{equation*}
Note that influence functions-based estimators can also be built using \textit{estimating equations}. However, in our setting, since the EIF writes as $\phi(\cP)(o) = -\Psi(\cP) + h(\eta,o)$ where $\eta$ are the nuisance parameters, both estimating equations and 1-Step estimators are \textit{equal}.
This is typically the case for the EIF augmented estimators of the ATE with the RD that both yield the AIPW estimator.

Finally, we describe in \Cref{sec:distributional_regression_ex} methods to perform distributional regression, to learn nuisance parameters $q_1$ and $q_0$.
We propose both a non-parametric distributional regression method, and a linear-parameterized one.

\section{Experiments on randomly generated data
}






We generate synthetic observational data as follows.
\begin{enumerate}
    \item Covariates are generated as standard multivariate Gaussian random variables $X_i\sim\cN(0,I_p)$.
	\item Treatment assignments follow a Bernoulli law of mean $\sigma(\langle X_i,v\rangle)$, where $v\in\R^p$ is unitary and $\sigma:\R\to[0,1]$ is the Gaussian cdf.  We took $v=(0,\ldots,0,1)$.
	\item Potential outcomes $Y_i(0),Y_i(1)\in\set{0,1}^d$ follow multidimensional Bernoulli laws of parameters $\langle u_k^{(t)},X_i\rangle,k\in[d]$ for $u_k^{(t)}\in\R^p$:
	\begin{equation*}
		\proba{Y_i(t)=(e_1,\ldots,e_d)}=\prod_{k=1}^{d} \Big(e_k \sigma(\langle u_k^{(t)},X_i\rangle) + (1-e_k)(1-\sigma(\langle u_k^{(t)},X_i\rangle))\Big)\,,
	\end{equation*}
	for any $(e_1,\ldots,e_d)\in\set{0,1}^d$.
    For $Y_i(0)$, we choose $u_k^{(0)}=0$, so that coordinates of $Y_i(0)$ are independent Bernoulli random variables of parameter $0.5$.
\end{enumerate}
We then distinguish two scenarios in terms of outcome generation, and dimensions of covariates and outcomes.
\begin{enumerate}
    \item \textbf{\Cref{fig:dimension_outcomes}:} we study how increasing the number of outcomes complexifies the task.
    Outcomes are generated by using $u^{(1)}$ a unitary vector, and setting $u_k^{(1)}=u^{(1)}$ for all $k\in[d]$. We choose $u^{(1)}=(1,\ldots,1)$.
    Feature dimension is fixed and small ($p=3$), while outcome dimension varies ($d\in\set{3,8,15}$).
    \item \textbf{\Cref{fig:doubly_robust_mis_dr}}: we study how incresing the dimension of covariates and having mispecified distribution regression affects our estimators, with $u_1^{(1)}=0$ and $u_k^{(1)}=(0,\ldots,0,1)$ for $k\geq 2$. 
    Outcome dimensions is fixed ($d=3$) while feature dimension varies ($p\in\set{3,8,15}$).
\end{enumerate}
Violin plots are made over 200 runs, forests are taken with 1000 trees.
In all plots, the red line corresponds to the ground truth estimand (\Cref{def:win_indiv}).
We compare the following estimators:
\begin{enumerate}
    \item \textit{NN} corresponds to the nearest neighbor estimator (\Cref{eq:estim_NN});
    \item 
    \textit{Distributional regression methods.}
    DRF corresponds to the distributional regression approach (\Cref{eq:DRF}), with distributional random forests to perform distributional regression, while DLR uses a distributional linear regression  that imposes a model with $u^{(k)}=u^{(1)}$ for all outcomes $k\in[d]$;
    \item \textit{EIF-augmented approaches.}
    EIF corresponds to the 1-step EIF estimator (\Cref{eq:EIF}), with distributional random forests for distributional regression, and probability forests for propensity scores. EIF Oracle corresponds to the same estimator, but with oracle propensity scores, and EIF lin uses logistic regression for propensity scores and a linear distributional regression that imposes a model with $u^{(k)}=u^{(1)}$ for all outcomes $k\in[d]$. 
\end{enumerate}
EIF lin and DLR are mispecified in \Cref{fig:doubly_robust_mis_dr}.

\textbf{Interpretation.}
Increasing the dimension of the outcomes in \Cref{fig:dimension_outcomes} (for fixed feature dimension equal to 3) makes the distributional regression task harder. However, we observe that all our methods, if well specified, are robust to this outcome dimension increase and still learn well the right win proportion, with almost no asymptotic bias for small outcome dimension of 3. For dimension 15, we expect that  the asymptotic bias would decrease for larger samples. Nearest neighbors seem to perform better in higher outcome dimensions than other methods, which is to be expected they because they are not hindered by large outcome dimensions. This difference is however subtle.

Increasing the dimension of the features in \Cref{fig:doubly_robust_mis_dr} (for fixed outcome dimension equal to 3) makes nearest neighbors matchings, propensity scores learning, and distributional regression harder.
For feature dimensions $p\in\set{3,8,15}$, all non-parametric methods (that are not mispecified)  perform well, with no or almost no non-asymptotic bias. Variance for small sample size is larger with large covariate dimension. Although one would expect some non-asymptotic biases from these non-parametric methods that work under the curse of dimensionality, this does not seem to be the case for all of them. 
Indeed, Nearest neighbor matching is non-asymptotically biased for covariate dimensions equal to 8 and 15, as well as the distributional regression methods DRF and DLR (with smaller biases for DRF, larger bias from DLR coming from mispecification).
However, our 1-step augmented approach (EIF) is non-biased even non-asymptotically, and thus appear to be more robust than matching methods, illustrating the strength of our novel approach.
Mispecified methods (DLR and EIF lin) have a large bias here.



\begin{figure}
    \centering
        \hfill
    \begin{subfigure}{0.45\textwidth}
        \centering
        \includegraphics[width=\linewidth, trim=50 150 0 50, clip]{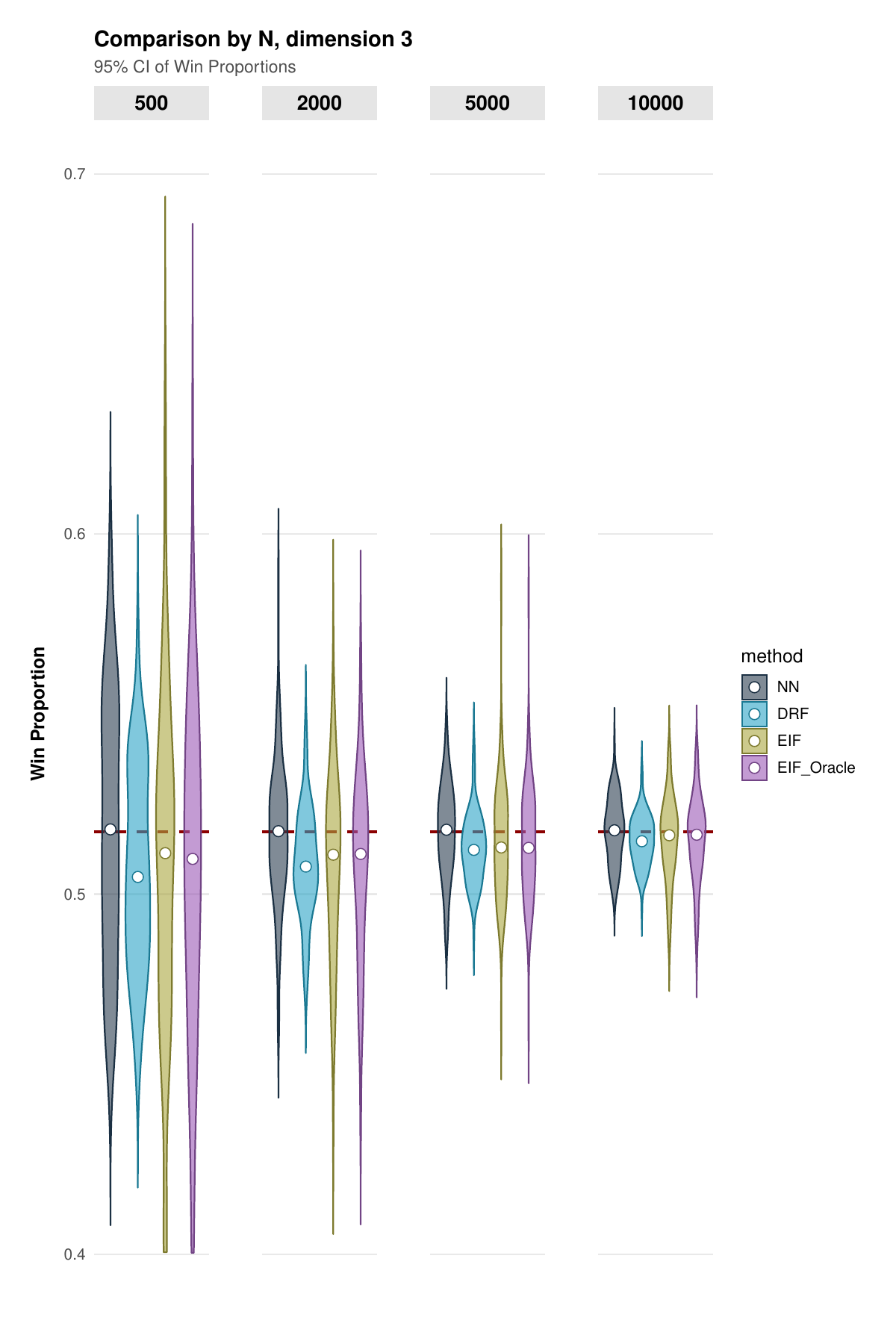}
        \caption{$d=p=3$}
        \label{fig:dimension_outcomes_3_wo_cheating}
    \end{subfigure}
    \hfill
    \begin{subfigure}{0.45\textwidth}
        \centering
        \includegraphics[width=\linewidth, trim=50 150 0 50, clip]{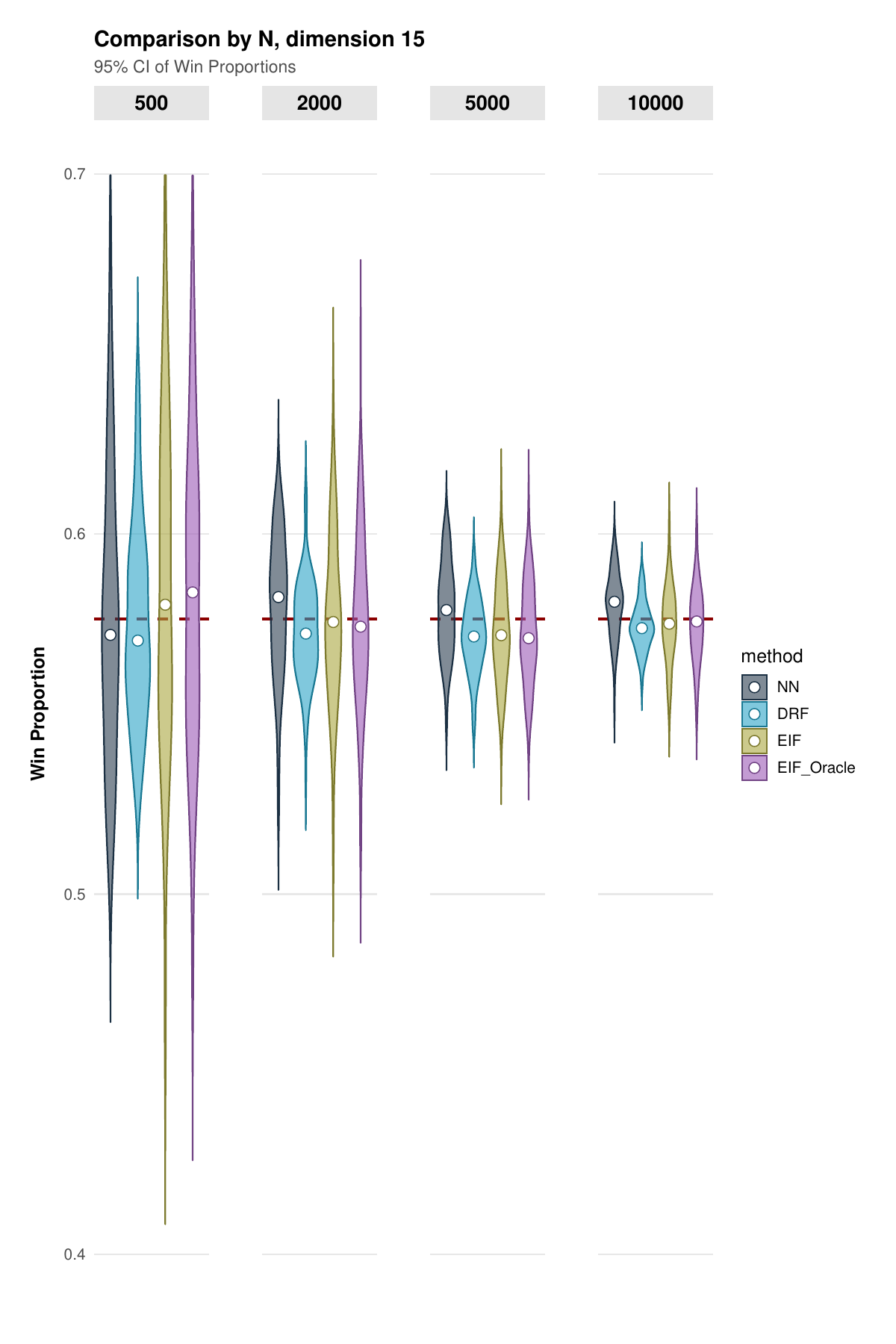}
        \caption{$d=15,p=3$}
        \label{fig:dimension_outcomes_20_wo_cheating}
    \end{subfigure}
        \hfill
    \caption{Uncorrelated outcomes setting, fixed covariate dimension ($p=3$), increasing outcome dimension ($d$).
	}
    \label{fig:dimension_outcomes}
\end{figure}

\begin{figure}
    \centering
    \begin{subfigure}{0.32\textwidth}
        \centering
        \includegraphics[width=\linewidth, trim=50 150 0 50, clip]{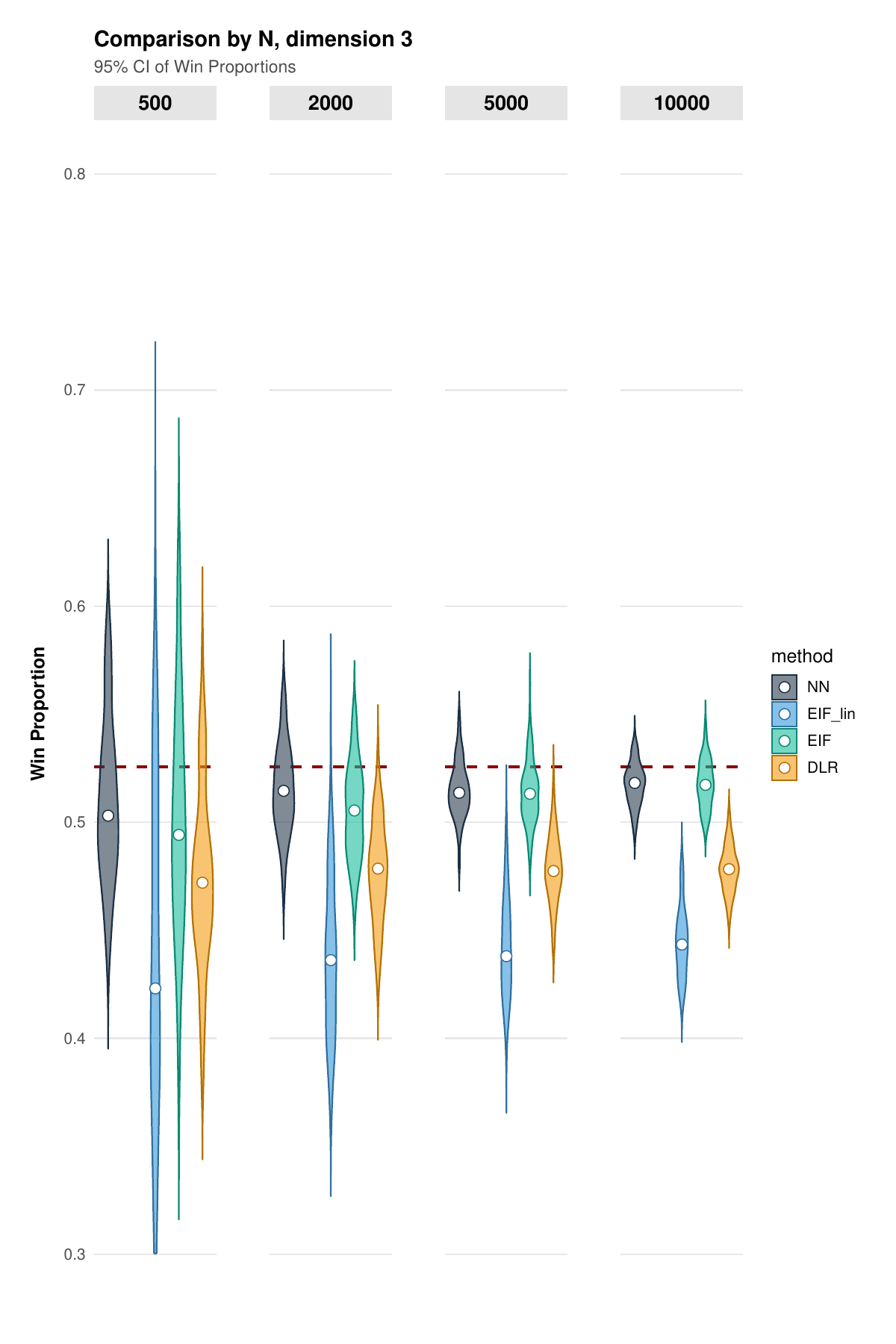}
        \caption{$d=p=3$}
        \label{fig:doubly_robust_mis_dr_3}
    \end{subfigure}
    \begin{subfigure}{0.32\textwidth}
        \centering
        \includegraphics[width=\linewidth, trim=50 150 0 50, clip]{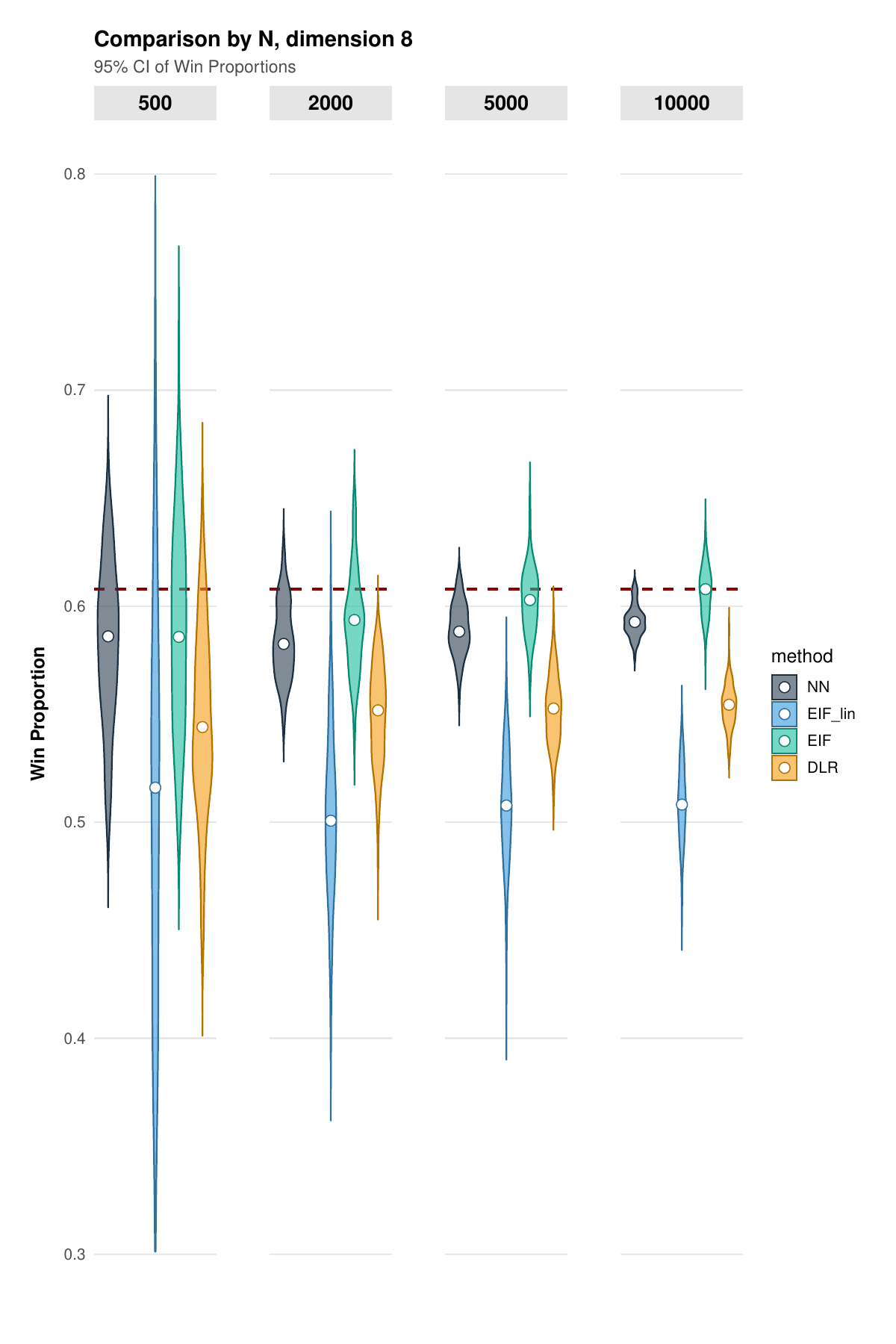}
        \caption{$d=p=8$}
        \label{fig:doubly_robust_mis_dr_8}
    \end{subfigure}
    \begin{subfigure}{0.32\textwidth}
        \centering
        \includegraphics[width=\linewidth, trim=50 150 0 50, clip]{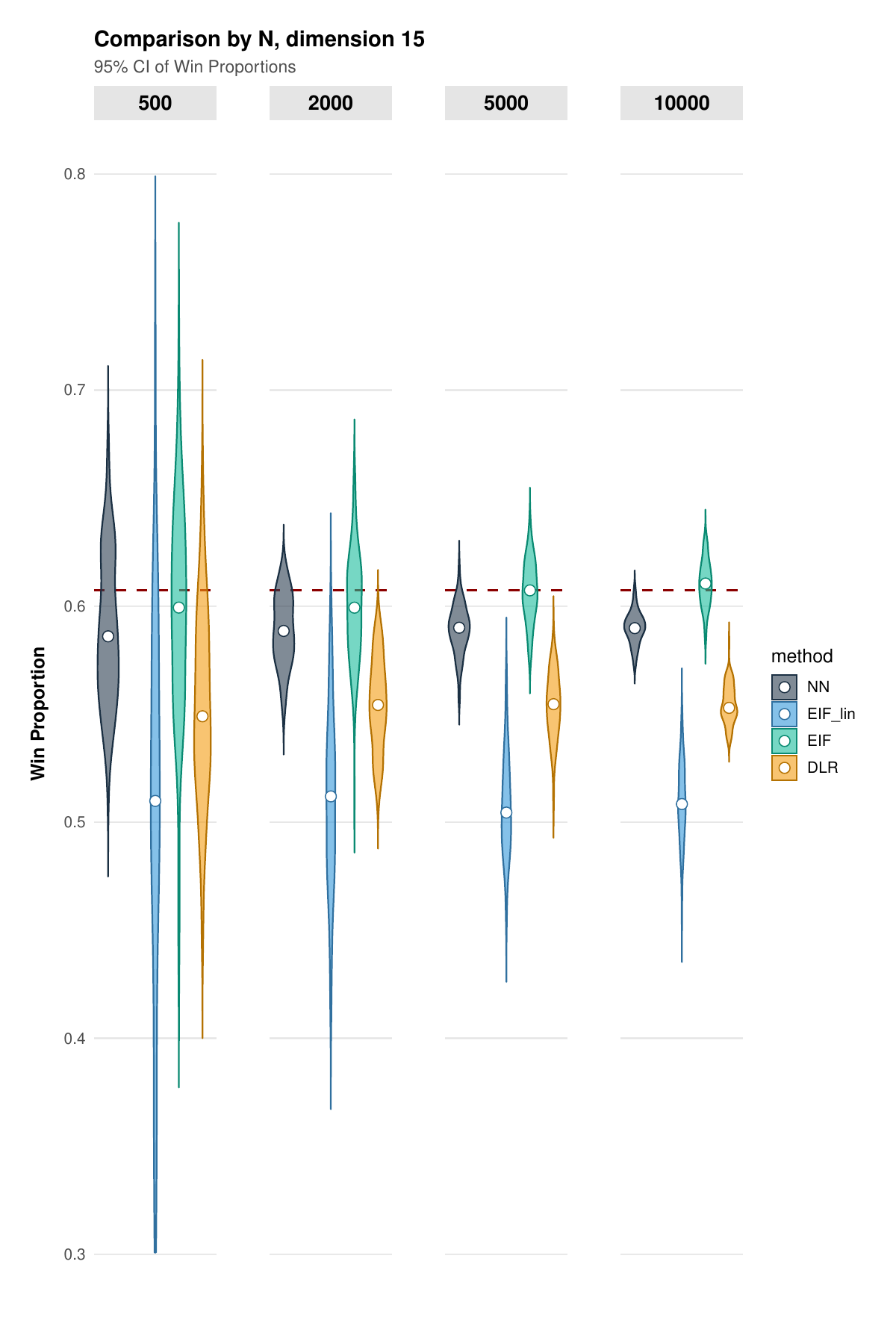}
        \caption{$d=p=15$}
        \label{fig:doubly_robust_mis_dr_15}
    \end{subfigure}

    \caption{Increasing covariate dimension, and mispecified distributional regression models for EIF lin and DLR.}
    \label{fig:doubly_robust_mis_dr}
\end{figure}

\section{Application to the CRASH-3 RCT}
\label{sec:crash3}

We illustrate our methodologies to perform a Win Ratio analysis of the CRASH-3 clinical trial \citep{crash3}, which studied the effects of tranexamic acid (\textbf{TXA}) in traumatic brain injury (TBI). 
We illustrate the different ways our methodologies can be used to derive Win Ratio estimates with confidence intervals.

\subsection{Presentation of the dataset}

\paragraph{Data description and preprocessing.}
The CRASH-3 RCT contains information on 12, 737 patients. In order to have lighter computations, we chose to use only a random sample (without replacement) of 6, 000 patients for all our analysis. 
Missing data is imputed using mice \citep{mice}.
\begin{enumerate}
    \item \textbf{Patients covariates} include: siteId (hospital identifier), sex (male/female), age (years), timeSinceInjury (hours since injury), systolicBloodPressure (mmHg), Glasgow Coma Scale scores, , pupilReact (pupil reactivity).
    \item Patients are either assigned \textbf{Placebo} (6,321) or \textbf{TXA} (6,416 patients). 
    \item \textbf{Primary outcomes} are \emph{death events} in the 28 days following trauma, that we encode as 1 or 0.
    \item \textbf{Secondary outcomes} are vascular risks. We encode them as 1 (if there is at least one stroke, heart attack, pulmonary embolism or deep vein thrombosis) or 0 (if there are no such events).
    \item \textbf{Tertiary outcomes} are the number of days the patient stayed in the hospital (censored at 28 days).
\end{enumerate}
Computing the average treatment effect for each of these 3 outcomes lead respectively to the confidence intervals, where $Y_1,Y_2,Y_3$ are respectively our death, secondary effects and hospitalization duration outcomes:
\[\esp{Y_1(1)-Y_1(0)}\in[-0.0032,0.0033]\,,\] \[\esp{Y_2(1)-Y_2(0)}\in[-0.0021,0.0079]\,,\] and \[\esp{Y_3(1)-Y_3(0)}\in[-0.2305,0.4512]\,,\] computed using the difference of means estimator. 
None of these effects are significant.

\subsection{The different methodologies used}
\label{sec:crash3methods}

For continuous features, Nearest Neighbors can be naturally applied after an eventual reweighting of the different features, to prevent from scale effects.
For categorical features, Nearest Neighbors can be applied with one-hot encodings for instance, leading to Hamming-distances.
In the presence of different categorical features of different importance, these can be weighted according to their importance.
CRASH-3 dataset has both  categorical and numerical features. We use two different ways to handle mixed data.
The first one is to consider the \textit{Mahalanobis} distance after one-hot encoding of categorical variables, that aims at naturally balancing variabilities and scale of the different features.
The second approach, that we recommand, uses a \textit{Factor Analysis of Mixed Data} (FAMD) \citep{josse2012handling,factominer}.
Since nearest neighbors algorithm relies on distance metrics (like Euclidean or Manahalahobis distances), it struggles with mixed data types, even with Manahalahobis distances that are usually used for numerical variables only. 
FAMD (the equivalent of a PCA for combined categorical and numerical covariates) transforms both numerical and categorical variables into a common latent space, ensuring a more meaningful distance calculation, i.e. balancing the influence of each variable in the computation of the principal components.
Furthermore, nearest neighbors algorithm suffers from high-dimensional data because distances become less meaningful in higher dimensions. FAMD captures the most important variations in fewer dimensions, improving NN’s effectiveness.
Finally, FAMD can be used with missing data without relying on external imputation methods, therefore extending our method to missing data in the covariates.
We compared the following methodologies in \Cref{fig:p_final}:
\begin{enumerate}
    \item Traditional Win Ratio, computed using all pairs of the dataset using the WINS package \citep{WINS}. 
    \item Stratified Win Ratio \citep{dong2018stratified}: we stratify according to the time since injury the patient received the treatment (TXA or placebo).
    \item Nearest neighbor approach for Win Ratio, as introduced in this paper, with either:
    \textbf{(i)}  the Manahalahobis distance on the covariates, or
    \textbf{(ii)} FAMD.
    We perform the FAMD analysis of the covariates, keep 95$\%$ of the variance explained, and perform nearest neighbors on these dimensions.
    Nearest neighbors are computed using the MatchIt package.
    
    \item Distributional Random Forests, as described in \Cref{sec:distributional_regression}.
    We estimate the \emph{Win Proportion} (obtained with $w(y,y')=\one_\set{y\succ y'}$) and the \emph{Loss Proportion} (obtained with $w(y,y')=\one_\set{y\preceq y'}$), and estimate the Win Ratio as the ratio between win et loss proportions.
    \item Our EIF approach, as described in \Cref{sec:distributional_regression}.
    For DRF and EIF forests are taken with 1000 trees.
\end{enumerate}
For confidence intervals, we use bootstrapping with 200 bootstraps.

\begin{figure}
        \centering
        \includegraphics[width=1\linewidth, trim=0 17 0 20, clip]{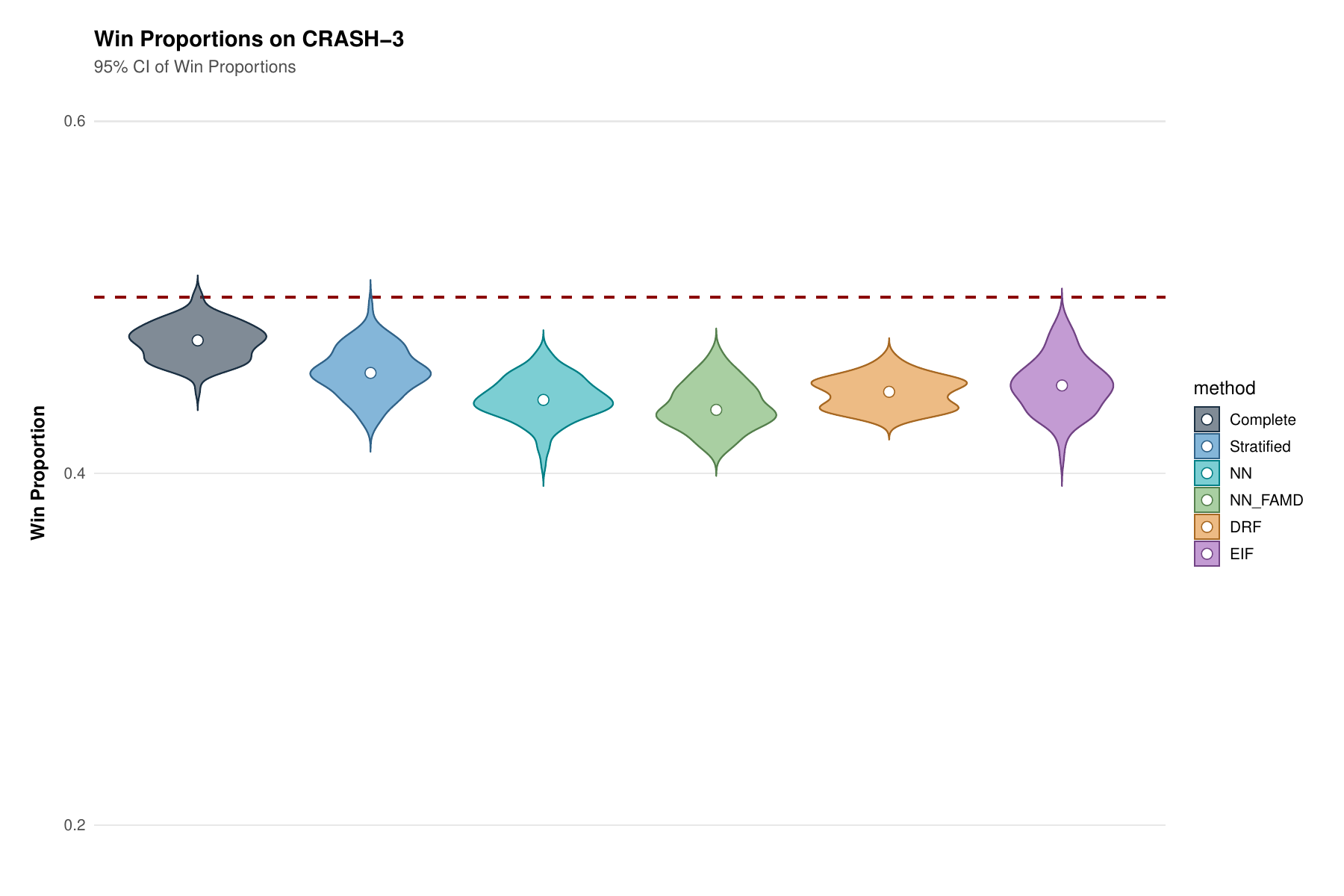}
        \caption{Win Proportion computed on the CRASH-3 dataset using different methodologies.
        }
        \label{fig:p_final}
\end{figure}

\textbf{Comments on \Cref{fig:p_final}.} 
The pairing methods (complete pairings, stratified pairings according to the time since injury, nearest neighbors) are compared in \Cref{fig:p_final}. 

\textit{Win Ratio with complete and stratified pairings.}
Complete pairing does not show significant results: 0.5 is in the confidence intervals, meaning that this method cannot conclude for or against treatment.
Stratified Win Ratio is still not significant, although stratifying seems to  drift the results towards negative treatment effects (non-significatively).

\textit{Nearest neighbors.}
Nearest neighbors obtain significant results. Indeed, matching being an extreme form of stratification, the small drift towards negative treatment effects is here magnified, and leads to have significant results.
This becomes even more apparent when pairing in the FAMD latent space.

\textit{Distributional regression approaches.}
DRF obtains similar results, but with smaller violin plot: it seems more robust than the pairing methods in this setting, illustrating the strength of our new approach.
The boxplots of our EIF augmented approaches is however a tiny bit larger.

\textit{Conclusion from this study.}
Our Win Ratio analysis of the CRASH-3 study suggests a preference over placebo rather than over treatment.
This is not in contradiction with the conclusion of the CRASH-3 study, that were in favor of treatment only \emph{on a subpopulation}. ATEs computed on the whole population are indeed not statistically significant, as showed above.
This suggests the strength of our methodologies, that can go beyond single outcomes or separately studying each outcome.



\section{Conclusion and open directions}

In this paper, we have introduced a causal inference framework for hierarchical outcome comparison methods like Win Ratio or Generalized Pairwise Comparisons.
We offered new perspectives and shed light on the different causal effect measures that may be targeted when performing a Win Ratio or related hierarchical outcome analysis.
In particular, we highlight the fact that if the population is heterogeneous, complete pairings (the historical and traditional way of forming pairs to compute the Win Ratio or the Net Benefit of a treatment) may target a population-level estimand that reverses treatment recommendations.
The new causal effect measure $\tau_\star$ we introduce in \Cref{def:win_indiv} aims at answering this by taking into account covariate effects, thus being more robust to heterogeneous population.
We stress the fact that this causal effect measure is related to stratified Win Ratio, since it can be estimated using an extreme form of stratification i.e., a Nearest Neighbors approach when forming pairs of treated-control patients.

\section*{Acknowledgements}

M. Even acknowledges funding from PEPR Santé Numérique SMATCH\footnote{\href{https://pepr-santenum.fr/2023/11/07/smatch/}{https://pepr-santenum.fr/2023/11/07/smatch/}}.
We would like to deeply thank Uri Shalit, Stefan Michiels and François Petit for very stimulating discussions in the writing of this paper, Dovid Parnas for suggesting \Cref{ex:counter_emp} and for all the very the useful feedback, and Jeffrey Näf for the help in using the DRF package.

\bibliography{biblio.bib}
\bibliographystyle{plainnat}

\newpage
\appendix




\section{Distributional random forests and distributional linear regression for estimating $q_1(x,y)$}
\label{sec:distributional_regression_ex}

So far, we have not specified the methods used to perform distributional regression to learn the  estimators $\hat q_t$ of the nuisance parameters $q_t$.
We describe in this section a non-parametric regression approach using \emph{distributional random forests} \citep{cevid2022distributional,pmlr-v238-benard24a_DRF}, and a parametric approach using linear regression or linear logistic classifiers.

\paragraph{Distributional random forests (DRF).} 
Our goal is to estimate $q_1(x,y)=\proba{w(Y_i| y)|X_i=x,T_i=1}$ with distributional random forests \citep{cevid2022distributional,pmlr-v238-benard24a_DRF}.
Let $\cH$ a Hilbert space with a kernel $k(\cdot,\cdot)$ defined on $\cY\times \cY$ (usually the Gaussian kernel).
For any probability measure $\P$ on $\cY$, let $\phi(\P)=\esp{k(Z,\cdot)}\in\cH$, for $Z\sim\P$.
Distributional random forests approximate $\mu(x,t)=\phi(\P_{Y|X=x,T=t})$ using splitting rules in the Hilbert space $\cH$, and as such are simply generalization of vanilla random forests to Hilbert spaces.
Then, \citet{cevid2022distributional} use the fact that for kernels such as the Gaussian kernel, learning kernel representations amounts to learning probability distributions.
Thus, learning with samples $(X_1,Y_1),\ldots,(X_m,Y_m)$, $\mu(x,t)$ is approximated via $\hat \mu_m(x,t)$ that takes the form:
\begin{equation*}
	\hat \mu_m(x,t)=\sum_{i=1}^m \omega_i(x,t)k(Y_i,\cdot)\in\cH\,,
\end{equation*} 
for weights $\omega_i(\cdot,\cdot)$ learnt by the forest.
The formula on the right hand side can be written as $\sum_{i=1}^m \omega_i(x,t)k(Y_i,\cdot)=\phi\left(\sum_{i=1}^m w_i(x,t)\delta_{Y_i}\right)$, where for $y\in\cY$, $\delta_y$ is a Dirac of mass 1 at point $y$.
The distribution $\P_{Y|X=x,T=t}$ is thus approximated by:
\begin{equation*}
	\hat \P^{(m)}_{Y|X=x,T=t}=\sum_{i=1}^m \omega_i(x,t)\delta_{Y_i}\,.
\end{equation*}
Hence, the probability $\proba{Y\in\cS|X=x,T=t}$ can be estimated by:
\begin{equation*}
	\hat\P^{(m)}\left(Y\in\cS|X=x,T=t\right)=\sum_{i=1}^m \omega_i(x,t) \delta_\set{Y_i\in\cS}\,.
\end{equation*}
More generally, any conditional expectation $\esp{h(Y)|X=x,T=t}$ for some measurable $h:\cY\to \R$ can be approximated by:
\begin{equation*}
	\hat\E^{(m)}\left[ h(Y)|X=x,T=t\right] = \sum_{i=1}^n \omega_i(x,t)h(Y_i)\,.
\end{equation*}
Now, we remark that the quantity that we wish to estimate writes as $q_1(x,y)=\proba{h_y(Y_i)|X_i=x,T_i=1}$, where $h_y(Y_i)=w(Y_i| y)$.
Thus, our estimate $\hat q_1(x,y)$ of $q_1(x,y)=\proba{w(Y_i| y)|X_i=x,T_i=1}$ is:
\begin{equation*}
	\hat q_1(x,y) = \sum_{i=1}^n \omega_i(x,1)w(Y_i|y)\,,
\end{equation*}
while our estimate $\hat q_0(x,y)$ of $q_0(x,y)=\proba{w(y| Y_i)|X_i=x,T_i=0}$ is:
\begin{equation*}
	\hat q_0(x,y) = \sum_{i=1}^n \omega_i(x,0)w(y|Y_i)\,.
\end{equation*}
In practice, these steps are implemented in a very concise way\footnote{that will be released soon on github}, using the following pseudo code to implement the distributional regression estimator (\Cref{eq:DRF}).
\begin{enumerate}
    \item Dataset $\cD=\set{(X_i,T_i,Y_i),i\in[n]}$ is split between a train set $\cD_\mathrm{train}=\set{(X_i,T_i,Y_i),i\in[m]}$ and an inference set $\cD_\mathrm{inference}=\set{(X_i,T_i,Y_i),m+1\leq i\leq n}$.
    \item A Distributional Random Forest (DRF) is trained on $\cD_\mathrm{train}$ to predict $Y_i$ from $(X_i,T_i)$, using the R package \citep{drf_package}, an implementation of DRFs as introduced by \citep{cevid2022distributional}.
    \item Apply the DRF to predict on the inference set $\cD_\mathrm{inference}$, to obtain the weights $\omega_i(X_j,T_j)$ for $i$ a train point and $j$ in the inference set.
    \item Compute and output:
    \begin{equation*}
        \frac{1}{n-m}\sum_{j=m+1}^n \sum_{i=1}^m \Big[ \one_\set{T_j=0} \omega_i(X_j,0) w(Y_i|Y_j)+ \one_\set{T_j=1} \omega_i(X_j,1) w(Y_j|Y_i)\Big]\,.
    \end{equation*}
\end{enumerate}


\paragraph{Linear Distributional Regression.}
We now present a parametric approach, that learns how to sample from $\P_{Y_i(t)|X_i}$ for $t\in\set{0,1}$.

First, if outcomes are real-valued ($\cY\subset\R^d$ with $d=1$), we fit a parametric model, depending on the nature of the outcome. For $t\in\set{0,1}$,
\begin{enumerate}
    \item If $\cY=\set{0,1}$ i.e., outcomes are binary, we fit a logistic regression on $(X_i,Y_i)_{i:T_i=t}$.
    To sample from $\P_{Y_i(t)|X_i}$, we then simply sample from a Bernoulli random variable of parameter given by the logistic model evaluated on $X_i$.
    \item If $\cY=\R$ with continuous outcomes, we make a Gaussian prior and fit a linear regression on $(X_i,Y_i)_{i:T_i=t}$.
    To sample from $\P_{Y_i(t)|X_i}$, we then simply sample from a Gaussian random variable of mean given by the regression model evaluated on $X_i$, and of variance the variance on the population of $Y_i$.
    \item If $\cY$ is discrete and finite, we can also learn a multi-class linear classifier, and proceed as above.
\end{enumerate}
We thus obtain approximate distributions $\hat\P_{Y_i(t)|X_i}$ from which we can sample, leading to natural estimators of $q_1(x,y)$ and $q_0(x,y)$, as:
\begin{equation*}
    \hat q_1(x,y)=\frac{1}{m}\sum_{j=1}^m w(\hat Y_j^{(1)}|y)\,,\qquad \hat q_0(x,y)=\frac{1}{m}\sum_{j=1}^m w(y|\hat Y_j^{(0)})\,,\qquad (\hat Y_j^{(t)})_{j\in[m]}\sim \hat \P_{Y_i(t)|X_i}^{\odot m}\,,
\end{equation*}
for some $m\geq 1$.

Then, for multivariate outcomes, we use the Rosenblatt transform~\citep{rosenblatt1952remarks}:
\begin{enumerate}
    \item We do as described just above to sample from the first coordinate $Y_{i1}(t)|X_i=x$ of the outcome.
    \item Then, iteratively, for $k\in\set{2,\ldots,d}$, having sampled $(Y_{i1}(t),\ldots,Y_{i,k-1}(t))$ from $\P_{(Y_{i1}(t),\ldots,Y_{i,k-1}(t))|X_i=x}$, we sample $Y_{ik}(t)|(Y_{i1}(t),\ldots,Y_{i,k-1}(t)),X_i=x$ as above, by putting the previous coordinates of the outcomes in the covariates.
\end{enumerate}
This leads to a parametric distributional regression method, that learns $\hat\P_{Y_i(t)|X_i=x}$.

\section{Proof: RCT setting and pairings}

\subsection{Proof of \Cref{thm:consistency_WR}.\ref{thm:consistency_WR_NN}}

\begin{proof}[Proof of \Cref{thm:consistency_WR}.\ref{thm:consistency_WR_NN}]
    Without loss of generality, let $t=0$ and $\sigma^\star=\sigma_0^\star$.
	Let $p = \hat p_W^{(n_0,n_1,\cC_\mathrm{nn})} = \frac{1}{n_0}\sum_{i\in\cN_0}w(Y_{\sigma^\star(i)}|Y_i)$ and $\bar p = \esp{w(Y^{(X_i)}(1)|Y_i(0))}$.
	We have, where $X_{\cN_1}=(X_k)_{k\in\cN_1}$:
	\begin{align*}
		\bar p - p &= \underbrace{\esp{ \frac{1}{n_0}\sum_{i\in\cN_0} \set{w(Y_i^{(X_i)}(1)|Y_i(0)) - w(Y_{\sigma^\star(i)}|Y_i)}\Big|X_{\cN_1}}}_{A_1}\\
		&\quad+\underbrace{ \frac{1}{n_0}\sum_{i\in\cN_0} \set{w(Y_{\sigma^\star(i)}|Y_i) - \esp{w(Y_{\sigma^\star(i)}|Y_i)|i\in\cN_0,X_{\cN_1}}}}_{A_2}
	\end{align*}
	\textbf{Control of $A_2$.}
	The term $A_2$ is controlled by computing variance.
	Let 
	\[a_i=w(Y_{\sigma^\star(i)}|Y_i) - \esp{w(Y_{\sigma^\star(i)}|Y_i)|i\in\cN_0,X_{\cN_1}}\,,\]
	and note that we have $\esp{a_i|i\in\cN_0,X_{\cN_1}}=0$.
	Let  
	\[p_k=\proba{\sigma(i)=k|X_{\cN_1}}\] 
	be the (random) weights of the (random) Voronoi cells associated to elements of $X_{\cN_1}$.
	We have, where $i\ne j\in\cN_0$ are arbitrary (note that conditioned on $X_{\cN_1}$, $\cN_0$ is fixed):
	\begin{align*}
		\var(A_2|X_{\cN_1})&=\frac{\var(a_i|X_{\cN_1})}{n_0}\\
		&\quad + \frac{n_0-1}{n_0}\Big(\esp{a_ia_j|\sigma(i)=\sigma(j),X_{\cN_1}}\proba{\sigma(i)=\sigma(j)|X_{\cN_1}}\\
		&\quad + \esp{a_ia_j|\sigma(i)\ne\sigma(j),X_{\cN_1}}\proba{\sigma(i)\ne\sigma(j)|X_{\cN_1}}\Big)\\
		&\leq \frac{1}{n_0} + \proba{\sigma(i)=\sigma(j)|X_{\cN_1}} + \frac{n_0-1}{n_0}\esp{a_ia_j|\sigma(i)\ne\sigma(j),X_{\cN_1}}\proba{\sigma(i)\ne\sigma(j)|X_{\cN_1}}\,.
	\end{align*}
	Conditionnally on $X_{\cN_1}$, $\sigma(i)$ and $\sigma(j)$ are independent random variables that assign $k$ with probability $p_k$.
	Thus, we can prove that $a_i,a_j$ are negatively correlated conditionally on $\sigma(i)\ne\sigma(j)$:
	\begin{align*}
		\esp{a_ia_j\one_\set{\sigma(i)\ne\sigma(j)}|X_{\cN_1}}&=\sum_{k\ne\ell\in\cN_1}p_kp_\ell\esp{a_ia_j|X_{\cN_1},\sigma(i)=k,\sigma(j)=\ell}\\
		&=\sum_{k\ne\ell\in\cN_1}p_kp_\ell\esp{a_i|X_{\cN_1},\sigma(i)=k,\sigma(j)=\ell}\esp{a_j|X_{\cN_1},\sigma(i)=k,\sigma(j)=\ell}\\
		&\qquad\quad\text{since }a_i\indep a_j|X_{\cN_1},\sigma(i)=k,\sigma(j)=\ell\\
		&=\sum_{k\ne\ell\in\cN_1}p_kp_\ell\esp{a_i|X_{\cN_1},\sigma(i)=k}\esp{a_j|X_{\cN_1},\sigma(j)=\ell}\\
		&=\sum_{k,\ell\in\cN_1}p_kp_\ell\esp{a_i|X_{\cN_1},\sigma(i)=k}\esp{a_j|X_{\cN_1},\sigma(j)=\ell}-\sum_{k\in\cN_1}p_k^2\esp{a_i|X_{\cN_1},\sigma(i)=k}^2\\
		&\leq\left(\sum_{k\in\cN_1}p_k\esp{a_i|X_{\cN_1},\sigma(i)=k}\right)^2\\
		&=0\,.
	\end{align*}
	Using $\esp{a_ia_j\one_\set{\sigma(i)\ne\sigma(j)}|X_{\cN_1}}=\esp{a_ia_j|X_{\cN_1},\sigma(i)\ne\sigma(j)}\proba{\sigma(i)\ne\sigma(j)}$; we thus have that:
	\begin{equation*}
	    \esp{a_ia_j|X_{\cN_1},\sigma(i)\ne\sigma(j)}\leq 0\,.
	\end{equation*}
	Thus, we have that $\var(A_2|X_{\cN_1})\leq \frac{1}{n_0} + \proba{\sigma(i)=\sigma(j)}$, and the last step of this first part of the proof is to show that $\proba{\sigma(i)=\sigma(j)}\to 0$, the purpose of the following lemma.
	\begin{lemma}\label{lem:sigma_equal}
		We have, for $i\ne j\in\cN_0$:
		\begin{equation*}
			\proba{\sigma(i)=\sigma(j)}\to 0\,.
		\end{equation*}
	\end{lemma}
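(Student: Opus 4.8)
The plan is to condition on the two control points $X_i=x$ and $X_j=x'$, reducing the claim to a pointwise statement that can be integrated out by dominated convergence. Writing $\mu$ for the common law of the features (in the RCT setting $T_i\indep X_i$, so treated and control features are i.i.d.\ from the same $\mu$ on the compact set $\cX$), I would first note that, by the law of total probability and the independence of $(X_i,X_j)$ from the treated points and the tie-breaking randomness,
\begin{equation*}
	\proba{\sigma^\star(i)=\sigma^\star(j)}=\int_{\cX\times\cX}\proba{\sigma^\star(i)=\sigma^\star(j)\,|\,X_i=x,X_j=x'}\,\dd\mu(x)\dd\mu(x')\,.
\end{equation*}
Since the integrand is bounded by $1$, it suffices to show that this conditional probability tends to $0$ as $n_1\to\infty$ for $\mu\otimes\mu$-almost every pair $(x,x')$, and then conclude by dominated convergence.

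For the main case $x\ne x'$, set $r=\NRM{x-x'}>0$ and let $d(x)=\min_{k\in\cN_1}\NRM{X_k-x}$ be the nearest-neighbor distance of $x$ among the treated points, and similarly $d(x')$. If $x$ and $x'$ share the same nearest neighbor $X_k$, the triangle inequality gives $r\leq\NRM{x-X_k}+\NRM{X_k-x'}=d(x)+d(x')$, hence
\begin{equation*}
	\proba{\sigma^\star(i)=\sigma^\star(j)\,|\,X_i=x,X_j=x'}\leq\proba{d(x)\geq r/2}+\proba{d(x')\geq r/2}\,.
\end{equation*}
It then remains to establish nearest-neighbor consistency, $d(x)\to0$ in probability for $\mu$-a.e.\ $x$: for $x$ in the support of $\mu$, every ball $B(x,\eps)$ has positive mass, so $\proba{d(x)>\eps}=(1-\mu(B(x,\eps)))^{n_1}\to0$, and $\mu$-a.e.\ $x$ lies in the support. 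This disposes of the off-diagonal case.

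The remaining work is the diagonal $x=x'$, which carries positive mass only when $\mu$ has atoms, and is precisely where the uniform tie-breaking enters. Conditioned on $X_i=X_j=a$ for an atom $a$, let $N_a$ be the number of treated points landing exactly at $a$; since $N_a\sim\mathrm{Binomial}(n_1,\mu(\set{a}))$ with $\mu(\set{a})>0$, we have $N_a\to\infty$ almost surely and $\proba{N_a=0}\to0$. On the event $N_a\geq1$ both control points realize nearest-neighbor distance $0$ and, by the uniform tie-breaking rule, independently select one of these $N_a$ points, so they coincide with conditional probability $1/N_a$; combining,
\begin{equation*}
	\proba{\sigma^\star(i)=\sigma^\star(j)\,|\,X_i=X_j=a}\leq\esp{\tfrac{1}{N_a}\one_\set{N_a\geq1}}+\proba{N_a=0}\longrightarrow0\,.
\end{equation*}
I expect the main obstacle to be this atomic part: the triangle-inequality argument degenerates exactly at $x=x'$, and one must instead exploit that a diverging number of tied treated points forces the two uniform selections apart. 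Once both the continuous ($x\ne x'$) and atomic ($x=x'$) contributions are shown to vanish pointwise, dominated convergence applied to the displayed integral completes the proof.
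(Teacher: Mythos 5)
Your proof is correct, and it takes a genuinely different route from the paper's for the main (non‑atomic) part. The paper writes $\proba{\sigma(i)=\sigma(j)}=\frac{1}{n_1}\sum_{k\in\cN_1}\proba{\sigma(i)=k\mid\sigma(j)=k}$ and conditions on the location $X_k=x$ of the candidate \emph{treated} point, splitting according to whether $x$ is an atom of the feature distribution: for an atom it performs the same Binomial $\esp{1/N_x}$ computation as you do, while for a non‑atom it covers an annulus around $x$ by finitely many small balls and shows that with probability tending to one each ball contains a treated point, which forces $\sigma(i)\ne k$ unless $X_i$ lands in a small‑mass neighbourhood of $x$. You instead condition on the two \emph{control} points $(X_i,X_j)=(x,x')$ and split diagonal versus off‑diagonal. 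Your off‑diagonal step --- if the two points share a nearest neighbour then $\NRM{x-x'}\le d(x)+d(x')$, combined with nearest‑neighbour consistency $\proba{d(x)>\eps}=(1-\mu(\cB(x,\eps)))^{n_1}\to0$ for $x$ in the support --- replaces the paper's covering construction by a one‑line triangle inequality, and it also handles uniformly the mixed case where $x$ is an atom but $x'\ne x$; your diagonal step coincides with the paper's atomic case (both reduce to $\esp{N_a^{-1}\one_{\set{N_a\geq 1}}}+\proba{N_a=0}\to0$). Both proofs close with dominated convergence. The only hypothesis worth stating explicitly in your write‑up is that the uniform tie‑breaking randomizations for $\sigma^\star(i)$ and $\sigma^\star(j)$ are independent of each other and of the data; the paper uses the same fact implicitly when it asserts that, conditionally on $X_{\cN_1}$, $\sigma(i)$ and $\sigma(j)$ are independent.
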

	\begin{proof}[Proof of \Cref{lem:sigma_equal}]
		We have:
		\[\proba{\sigma(i)=\sigma(j)}=\frac{1}{n_1}\sum_{k\in\cN_1}\proba{\sigma(i)=k|\sigma(j)=k}\,.\]
		Let $k\in\cN_1$ and $x\in\mathrm{Supp}(X)$: $\forall \eps>0,\proba{X\in\cB(x,\eps)}>0$.

		\emph{First case:} $\proba{X=x}=p_x>0$. In that case, let $N_x=|\set{\ell\in\cN_1,X_\ell=x}$.
		We have that \[\proba{\sigma(i)=k|N_x,X_k=x,\sigma(j)=k}=\frac{1}{N_x}\,,\] and $N_x$ is a binomial random variable of parameters $(n_1,p_x)$.
		This leads to:
		\begin{align*}
			\proba{\sigma(i)=k|X_k=x,\sigma(j)=k}&=\sum_{N=1}^{n_1} 2^{-n_1}\frac{p_x^N(1-p_x)^{n_1-N}}{N}\binom{n_1}{N}\\
			&=\sum_{N=1}^{n_1} \frac{p_x^N(1-p_x)^{n_1-N}}{n_1+1}\binom{n_1+1}{N+1}\\
			&\leq\frac{1}{p_x(n_1+1)}\,.
		\end{align*}
		Thus, $\proba{\sigma(i)=k|X_k=x,\sigma(j)=k}\to 0$ as $n_1\to\infty$.

		\emph{Second case:} $\proba{X=x}=0$ (no Dirac mass).
		Let $\delta\in(0,1)$ and let $R>\eps>0$ such that $\proba{X\in\cB(x,\eps)}< \delta$ and $\proba{X\in\cB(x,R)}>1-\delta$.
		We cover $\cB(0,R)\setminus\cB(0,\eps)$ with $m$ balls of radius $\eps/2$: $\cB(0,R)\setminus\cB(0,\eps)\subset\bigcup_{r=1}^m\cB(z_r,\eps/2)$,
		where $z_r\in\cB(0,R)\setminus\cB(0,\eps)$.
		We remove all $z_r$ that satisfy $\proba{X\in\cB(z_r,\eps/2)}=0$ from this union.
		Let $\cE$ be the event $\set{\forall r\in[m],\exists \ell\in\cN_1\setminus\set{k},X_\ell\in\cB(z_r,\eps/2)}$.
		We have that 
		\begin{align*}
			\proba{\sigma(i)=k|X_k=x,\sigma(j)=k,\cE}&\leq \proba{X_i\in\cB(x,\eps)}\\
			&\leq \delta\,.
		\end{align*}
		Then,
		\begin{align*}
			\proba{\cE^C}\leq&\sum_{r=1}^m  \proba{\forall \ell\in\cN_1\setminus\set{k},X_\ell\notin\cB(z_r,\eps/2)}\\
			&\leq m(1-p_{\min})^{n_1-1}\,,
		\end{align*}
		where $p_{\min}=\min_{r\in[m]}\proba{X\in\cB(z_r,\eps/2)}$.
		Thus, $\proba{\cE}\to 1$, and $\proba{\sigma(i)=k|X_k=x,\sigma(j)=k}\leq 1-\proba{\cE} + \delta$.
		We can thus conclude that $\proba{\sigma(i)=k|X_k=x,\sigma(j)=k}\to 0$ as $n_1\to\infty$.

		\emph{Wrapping things up.}
		Using $\proba{\sigma(i)=k|\sigma(j)=k}=\int_\cX\proba{\sigma(i)=k|\sigma(j)=k,X_k=x}\dd\P(X_k=x)$, we have $\proba{\sigma(i)=k|\sigma(j)=k}\to 0$ as $n_1\to\infty$, using dominated convergence.
	\end{proof}
Using this, we have $\var(A_2)\to 0$, leading to $A_2\to0$ in probability.


\bigskip

\noindent \textbf{Control of $A_1$.}
We now control $A_1$, using the continuity assumption.
Using unconfoundedness:
	\begin{align*}
		&\qquad\qquad |A_1| \leq \esp{\delta(X_i,X_{\sigma^\star(i)},Y_i(0))|X_{\cN_1},T_i=0}\,,\qquad\text{where}\\
		& \delta(x,x',y)= \left|\esp{w(Y_i^{(X_i)}(1)|y)|X_i=x} - \esp{w(Y_j(1)|y)|X_j=x'}\right|\,.
	\end{align*}
	Let $\eps>0$ and $y$ fixed. Using our continuity and compactness assumptions, $x,x'\mapsto \delta(x,x',y)$ is uniformly continuous on $\cX\times \cX$, so that there exists $\eta>$ such that if $\NRM{x-x'}\leq \eta$, we have $\delta(x,x',y)\leq \eps$.
	We are going to show that with high probability, $\NRM{X_i-X_{\sigma^\star(i)}}\leq \eta$.
	Using compactness of $\cX$, there exist $u_1,\ldots,u_p\in\cX$ such that $\cX\subset\bigcup_{k=1}^p\cB(u_k,\eta/2)$.
	Let $p_k=\proba{X_i\in\cB(u_k,\eta/2)}$: we assume that $p_k>0$ for all $k$, otherwise we remove this ball and the corresponding $u_k$.
	Let $p_{\min}=\min_k p_k >0$. Let $k_x\in[p]$ such that $X_i\in\cB(u_{k_x},\eta/2)$. 
	We have, working conditionnally on $\cN_0,\cN_1,i\in\cN_0$:
	\begin{align*}
		\proba{\NRM{X_i-X_{\sigma^\star(i)}}>\eta}&\leq \esp{\proba{X_{\sigma^\star(i)}\notin\cB(u_{k_x},\eta/2) |k_x }}\\
		& = \esp{\proba{\forall j\in \cN_1\,,\,X_{j}\notin\cB(u_{k_x},\eta/2) |k_x }}\\
		& = \esp{(1-p_{k_x})^{n_1}}\\
		& \leq (1-p_{\min})^{n_1}\\
		& \underset{n_1\to \infty}{\longrightarrow} 0\,.
	\end{align*}
	This leads to:
	\begin{equation*}
		\proba{\delta(X_i,X_{\sigma(i)},y)>\eps}\leq (1-p_{\min})^{n_1}\,,
	\end{equation*}
	and thus $\proba{\delta(X_i,X_{\sigma(i)},y)\to 0}=1$ as $n_1\to\infty$, leading to $\esp{\delta(X_i,X_{\sigma(i)},Y_i(0))}\to 0$.
	We thus have that $\esp{|A_1|}\to 0$, and thus $A_1\to 0$ in probability, since $|A_1|\leq 1$ almost surely.
    This concludes the proof.
\end{proof}

\subsection{Proof of \Cref{thm:consistency_WR}.\ref{thm:consistency_WR_Complete}}

\begin{proof}[Proof of \Cref{thm:consistency_WR}.\ref{thm:consistency_WR_Complete}]
Let $\bar p=\esp{w(Y_i(1)|Y_j(0))}$ for $i\ne j$.
	We now prove the second point, with complete pairings.
Using our assumptions, we have that 
\[\esp{w(Y_i|Y_j)|T_i=1,T_j=0} = \esp{w(Y_j(1)|Y_i(0))}= \bar p\,,\]
so that 
\[\esp{\hat p_W^{(n_0,n_1,\cC_\mathrm{Tot})}}=\bar p\,.\]
Since $0\leq w \leq 1$, $\var(w(Y_j|Y_i))\leq 1$, leading to:
\begin{align*}
	\var\left(\hat p_W^{(n_0,n_1,\cC_\mathrm{Tot})}\right)&=\frac{1}{n_0^2n_1^2} \sum_{i,i'\in\cN_0,j,j'\in\cN_1}\esp{(w(Y_j|Y_i)-\bar p)(w(Y_{j'}|Y_{i'}))}\\
	&=\frac{1}{n_0^2n_1^2}\sum_{i\in\cN_0,j\in\cN_1}\esp{(w(Y_j|Y_i)-\bar p)^2}\\
	&\quad+ \underbrace{\frac{1}{n_0^2n_1^2} \sum_{i\ne i'\in\cN_0,j,j'\ne\cN_1}\esp{(w(Y_j|Y_i)-\bar p)(w(Y_{j'}|Y_{i'})-\bar p)}}_{=0\quad \text{(independence)}}\\
	&\quad + \frac{1}{n_0^2n_1^2} \sum_{i\in\cN_0,j\ne j'\in\cN_1}\esp{(w(Y_j|Y_i)-\bar p)(w(Y_{j'}|Y_{i'}))}\\
	&\quad + \frac{1}{n_0^2n_1^2} \sum_{i\ne i'\in\cN_0,j\in\cN_1}\esp{(w(Y_j|Y_i)-\bar p)(w(Y_{j'}|Y_{i'}))}\\
	&\leq \frac{1}{n_0n_1}+\frac{1}{n_0}+\frac{1}{n_1}\,.
\end{align*}
Thus, $\hat p_W^{(n_0,n_1,\cC_\mathrm{Tot})}\longrightarrow\bar p$ in probability as $n_0,n_1\to \infty$.
\end{proof}

\section{Proof: Nearest Neighbors and Observational Setting}

\begin{proof}[Proof of \Cref{thm:consistency_WR_observational}]
    The proof of the first point (consistency of $\hat\tau_\mathrm{NN}^{(t)}$) proceeds exactly as the proof of \Cref{thm:consistency_WR}/\Cref{thm:consistency_WR_NN}, except that the distribution of $X_i$ is not the same.
    The proof of the second points then combines this with $|\cN_t|/n\to \proba{T_i=t}$ almost surely.
\end{proof}

\section{Inverse propensity score weighted Nearest Neighbors}
\label{app:ipw}

through the following estimator:
\begin{equation}\label{eq:IPW_NN}
	\hat\tau_\mathrm{IPW} = \frac{1}{n}\sum_{i\in\cN_0}w(Y_{\sigma_1^\star(i)}|Y_i)(1-\hat\pi(X_i))^{-1}\,,
\end{equation}
where
\begin{equation*}
	\forall i\in\cN_1\,,\quad\sigma^\star_1(i)\in\argmin_{j\in\cN_1}\NRM{X_i-X_{j}}\,,
\end{equation*}
with uniform sampling if there are equalities.
Note that in the context of a RCT, propensities are known and are constant (for all $x\in\cX$, $\pi(x)=\pi$), and that $n_0/n$ is an unbiased and consistent estimate of $\pi$.
As such, \Cref{eq:estim_NN_def2} is simply a specific case of \Cref{eq:IPW_NN}.
We next show that $\hat \tau_\mathrm{IPW}$ is indeed a generalization of $p_\mathrm{W}$ with $\cC_\mathrm{NN}$ to observational data, since it is a consistent estimator of the same causal estimand.

\begin{theorem}
	Assume that \Cref{hyp:sutva,hyp:unconfoundedness,hyp:positivity} hold.
	Assume that $\hat \pi$ satisfies:
	\begin{enumerate}
	    \item Pointwise consistency almost surely: $\forall x\in\cX$, we have $\proba{\hat \pi(x)\to \pi(x)}=1$;
	    \item Mean consistency: $\esp{|\hat \pi(X_i)-\pi(X_i)|}\to 0$;
	    \item Finite and bounded second moment of propensity scores: $\limsup \esp{\frac{1}{(1-\hat\pi(X_i))^2}}$ and $\limsup \esp{\frac{1}{\hat\pi(X_i)^2}}<\infty$.
	\end{enumerate} 
	Assume finally that  $(x,y)\mapsto \esp{w(Y_i(1)|y)|X_i=x}$ is continuous in its first variable $x$ 		and that $\cX$ is compact.
	Then, $\hat\tau_\mathrm{IPW}$ (\Cref{eq:IPW_NN}) is a consistent estimator of $\tau_\star$ (\Cref{def:win_indiv}):
	\begin{equation*}
		\frac{1}{n}\sum_{i\in\cN_0}w(Y_{\sigma_1^\star(i)}|Y_i)(1-\hat\pi(X_i))^{-1}\underset{\P}{\longrightarrow} \tau_\star\,,
	\end{equation*}
	as $n_0,n_1\to\infty$.
\end{theorem}

As opposed to the inverse propensity weighting estimators provided by \citet{Mao2017} and by many subsequent works \citep{Chen2024,Chiaruttini2024,Guo2022,Yin2022,zhang2022causalinferencewinratio},
our estimator $\hat \tau_\mathrm{IPW}$ is consistent for $\tau_\star$ rather than for $\tau_\mathrm{pop}$.

\begin{proof}
	Let \[\hat p= \frac{1}{n}\sum_{i\in\cN_0}w(Y_{\sigma_1^\star(i)}|Y_i)(1-\hat\pi(X_i))^{-1}\]
	be the IPW estimator,
	\[\hat p^\star = \frac{1}{n}\sum_{i\in\cN_0}w(Y_{\sigma_1^\star(i)}|Y_i)(1-\pi(X_i))^{-1}\]
	be the IPW estimator with oracle propensities, and $p=\tau_\star$ be the targeted estimand.
	We have:
	\begin{equation*}
		\hat p-p=(\hat p-\hat p^\star)+(\hat p^\star -p)\,.
	\end{equation*}
	For this first term, 
	\begin{align*}
		|\hat p-\hat p^\star|&=\left|\frac{1}{n}\sum_{i=1}^n(1-T_i)w(Y_{\sigma_1^\star(i)}|Y_i)\set{(1-\hat\pi(X_i))^{-1}-(1-\pi(X_i))^{-1}}\right|\\
		&\leq \frac{1}{n}\sum_{i=1}^n(1-T_i)w(Y_{\sigma_1^\star(i)}|Y_i)\left|(1-\hat\pi(X_i))^{-1}-(1-\pi(X_i))^{-1}\right|\\
		&\leq \frac{1}{n}\sum_{i=1}^n\left|(1-\hat\pi(X_i))^{-1}-(1-\pi(X_i))^{-1}\right|\\
		&\leq \frac{1}{n}\sum_{i=1}^n\left\{\one_\set{\hat\pi(X_i)<1-\eta'}\left|(1-\hat\pi(X_i))^{-1}-(1-\pi(X_i))^{-1}\right|\right.\\
		&\qquad\qquad\left.+\one_\set{\hat\pi(X_i)\geq 1-\eta'}\left|(1-\hat\pi(X_i))^{-1}-(1-\pi(X_i))^{-1}\right|\right\}\,,
	\end{align*}
	for some given $0<\eta'<\eta$.
	First, notice that $u\mapsto (1-u)^{-2}$ is $\eta^{'-2}-$Lipschitz on $[0,1-\eta']$, so that:
	\begin{align*}
		\frac{1}{n}\sum_{i=1}^n\one_\set{\hat\pi(X_i)<1-\eta'}\left|(1-\hat\pi(X_i))^{-1}-(1-\pi(X_i))^{-1}\right|&\leq \frac{1}{n\eta^{'2}}\sum_{i=1}^n\one_\set{\hat\pi(X_i)<1-\eta'}\left|\hat\pi(X_i)-\pi(X_i)\right|\\
		&\leq \frac{1}{n\eta^{'2}}\sum_{i=1}^n\left|\hat\pi(X_i)-\pi(X_i)\right|\\
		&= \eta^{'-2}\esp{\left|\hat\pi(X_i)-\pi(X_i)\right|}\\
		&\quad+\frac{1}{n\eta^{'2}}\sum_{i=1}^n\set{\left|\hat\pi(X_i)-\pi(X_i)\right|-\esp{\left|\hat\pi(X_i)-\pi(X_i)\right|}}\,.
	\end{align*}
	Here, we have that $\esp{\left|\hat\pi(X_i)-\pi(X_i)\right|}\to0$ using mean consistency, while the second term converges to 0 in probability (sum of $n$ centered bounded random variables).
	Then, using Cauchy-Schwarz inequality, 
	\begin{align*}
		&\quad\frac{1}{n}\sum_{i=1}^n\one_\set{\hat\pi(X_i)\geq 1-\eta'}\left|(1-\hat\pi(X_i))^{-1}-(1-\pi(X_i))^{-1}\right|\\
		&\leq \sqrt{\frac{1}{n}\sum_{i=1}^n\one_\set{\hat\pi(X_i)\geq 1-\eta'} \times \frac{1}{n}\sum_{i=1}^n\left((1-\hat\pi(X_i))^{-1}-(1-\pi(X_i))^{-1}\right)^2}\,.
	\end{align*}
	The first factor in the square root satisfies 
	\begin{align*}
		\frac{1}{n}\sum_{i=1}^n\one_\set{\hat\pi(X_i)\geq 1-\eta'} &=\proba{ \hat\pi(X_i)\geq 1-\eta'} +\frac{1}{n}\sum_{i=1}^n\one_\set{\hat\pi(X_i)\geq 1-\eta'} -\proba{\hat\pi(X_i)\geq 1-\eta'}\,.
	\end{align*}
	The first term is deterministic and converges to zero almost surely thanks to pointwise convergence and dominated convergence, while the second term converges to zero as the averaged sum of $n$ independent, centered and bounded random variables.
	All this leads to $\hat p-\hat p^\star\to 0$ in probability and almost surely.

	For the second term $\hat p^\star -p$, we adapt the proof of \Cref{thm:consistency_WR}.\ref{thm:consistency_WR_NN}.
	We extend the defintion of $\sigma^\star$ to $\cN_1$: for $i\in\cN_1$ we have $\sigma^\star(i)=i$.
	We have, using that $p=\esp{w(Y_i(1),Y_i(0))}=\esp{\frac{1-T_i}{1-\pi(X_i)}w(Y_i(1),Y_i(0))}$ with unconfoundedness:
	\begin{align*}
		p - \hat p^\star &= \underbrace{\esp{ \frac{1}{n}\sum_{i\in\cN_0} \frac{w(Y^{(X_i)}(1)|Y_i(0)) - w(Y_{\sigma^\star(i)}|Y_i)}{1-\pi(X_i)}\Big|X_{\cN_1}}}_{A_1}\\
		&\quad-\underbrace{ \frac{1}{n}\sum_{i=1}^n \frac{w(Y_{\sigma^\star(i)}|Y_i)(1-T_i)}{1-\pi(X_i)}- \esp{\frac{w(Y_{\sigma^\star(i)}|Y_i)(1-T_i)}{1-\pi(X_i)}\Big|X_{\cN_1}}}_{A_2}\,.
	\end{align*}
	
	\noindent \textbf{Control of $A_2$.}
	The term $A_2$ is controlled computing its variance, as in the proof of \Cref{thm:consistency_WR}.\ref{thm:consistency_WR_NN}.
	Let $a_i=\frac{w(Y_{\sigma^\star(i)}|Y_i)(1-T_i)}{1-\pi(X_i)}- \esp{w(Y_{\sigma^\star(i)}|Y_i)|X_{\cN_1}}$.
	Since $\esp{a_i}=0$, we have $\esp{A_2|X_{\cN_1}}=0$.
	Then, using overlap, $|a_i|\leq 1/\eta$ almost surely, so that $\esp{a_i^2|\cN_1}\leq 1/\eta^2$.

	Let  $p_k=\proba{\sigma(i)=k|X_{\cN_1}}$ be the (random) weights of the (random) Voronoi cells associated to $X_{\cN_1}$.
	We have:
	\begin{align*}
		\var(A_2|X_{\cN_1})&=\frac{1}{n^2}\sum_{i=1}^n\var(a_i|X_{\cN_1})\\
		&\quad + \frac{1}{n^2}\sum_{i\ne j}\Big(\esp{a_ia_j|\sigma(i)=\sigma(j),X_{\cN_1}}\proba{\sigma(i)=\sigma(j)|X_{\cN_1}}\\
		&\qquad+ \esp{a_ia_j|\sigma(i)\ne\sigma(j),X_{\cN_1}}\proba{\sigma(i)\ne\sigma(j)|X_{\cN_1}}\Big)\\
		&\leq \frac{1}{\eta^2n} +\frac{1}{n^2}\sum_{i\ne j} \proba{\sigma(i)=\sigma(j)|X_{\cN_1}} + \frac{n-1}{n}\esp{a_ia_j|\sigma(i)\ne\sigma(j),X_{\cN_1}}\proba{\sigma(i)\ne\sigma(j)|X_{\cN_1}}
	\end{align*}
	Conditionnally on $X_{\cN_1}$ and on $i,j\in\cN_0$, $\sigma(i),\sigma(j)$ are independent random variables that assign $k$ with probability $p_k$.
	Thus, we can prove that $a_i,a_j$ are negatively correlated conditionally on $\sigma(i)\ne\sigma(j)$:
	\begin{align*}
		\esp{a_ia_j\one_\set{\sigma(i)\ne\sigma(j)}|X_{\cN_1}}&=\sum_{k\ne\ell\in\cN_1}p_kp_\ell\esp{a_ia_j|X_{\cN_1},\sigma(i)=k,\sigma(j)=\ell}\\
		&=\sum_{k\ne\ell\in\cN_1}p_kp_\ell\esp{a_i|X_{\cN_1},\sigma(i)=k,\sigma(j)=\ell}\esp{a_j|X_{\cN_1},\sigma(i)=k,\sigma(j)=\ell}\\
		&\qquad\quad\text{since }a_i\indep a_j|X_{\cN_1},\sigma(i)=k,\sigma(j)=\ell\\
		&=\sum_{k\ne\ell\in\cN_1}p_kp_\ell\esp{a_i|X_{\cN_1},\sigma(i)=k}\esp{a_j|X_{\cN_1},\sigma(j)=\ell}\\
		&=\sum_{k,\ell\in\cN_1}p_kp_\ell\esp{a_i|X_{\cN_1},\sigma(i)=k}\esp{a_j|X_{\cN_1},\sigma(j)=\ell}-\sum_{k\in\cN_1}p_k^2\esp{a_i|X_{\cN_1},\sigma(i)=k}^2\\
		&\leq\left(\sum_{k\in\cN_1}p_k\esp{a_i|X_{\cN_1},\sigma(i)=k}\right)^2\\
		&=0\,.
	\end{align*}
	Thus, we have that $\var(A_2)\leq \frac{1}{n_0} + \proba{\sigma(i)=\sigma(j)|i,j\in\cN_0}$ where $i\ne j$.
	\begin{lemma}\label{lem:sigma_equal}
		We have, for $i\ne j$:
		\begin{equation*}
			\proba{\sigma(i)=\sigma(j)|i,j\in\cN_0}\to 0\,.
		\end{equation*}
	\end{lemma}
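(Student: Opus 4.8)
The plan is to reduce this lemma to the analogue already established in the RCT setting, since the nearest-neighbor assignment $\sigma$ depends on the data only through the feature vectors $(X_k)_k$ (recall $\sigma(i)\in\argmin_{k\in\cN_1}\NRM{X_i-X_k}$), and not on the treatment mechanism nor on the outcomes. First I would use the exchangeability of the treated points, which are i.i.d.\ conditionally on $\{T=1\}$, to write exactly as in the proof of \Cref{thm:consistency_WR}.\ref{thm:consistency_WR_NN},
\begin{equation*}
	\proba{\sigma(i)=\sigma(j)\mid i,j\in\cN_0}=\frac{1}{n_1}\sum_{k\in\cN_1}\proba{\sigma(i)=k\mid \sigma(j)=k}\,,
\end{equation*}
where every probability is implicitly conditioned on $i,j\in\cN_0$, that is on $T_i=T_j=0$. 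By exchangeability every summand is equal, so it suffices to prove that $\proba{\sigma(i)=k\mid\sigma(j)=k}\to0$ for a single fixed $k\in\cN_1$.

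The one genuinely new point relative to the RCT proof is that, once we condition on the treatment labels, the feature laws differ between the two groups: conditionally on $T=0$ the control features $X_i,X_j$ are i.i.d.\ with law $\mu_0=\cP(X\mid T=0)$, whereas the treated features are i.i.d.\ with law $\mu_1=\cP(X\mid T=1)$. The key observation that lets the earlier argument transfer verbatim is that \Cref{hyp:positivity} forces $\mu_0$ and $\mu_1$ to be mutually equivalent with densities bounded away from $0$ and $\infty$: since $\eta\le\pi\le1-\eta$, one checks $\mu_1(B)\ge\eta^2\,\mu_0(B)$ for every Borel $B$ (and symmetrically). In particular $\mu_0$ and $\mu_1$ share the same support and the same atoms, so wherever a control point lands there are treated points arbitrarily close once $n_1$ is large.

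With this in hand I would condition on $X_k=x$ and split into the same two cases as before, establishing the pointwise limit $\proba{\sigma(i)=k\mid X_k=x,\sigma(j)=k}\to0$ for every $x\in\support(\mu_1)$. If $x$ is an atom, writing $N_x$ for the number of treated points located at $x$ (a $\mathrm{Binomial}(n_1,\mu_1(\{x\}))$ variable with $\mu_1(\{x\})>0$), uniform tie-breaking gives a contribution of order $\esp{1/N_x}\le\big(\mu_1(\{x\})(n_1+1)\big)^{-1}\to0$, exactly as in the RCT Case~1. If $x$ is not an atom, I would cover a shell $\cB(x,R)\setminus\cB(x,\eps)$ by finitely many balls of radius $\eps/2$ and positive $\mu_1$-mass, and let $\cE$ be the event that each such ball contains a treated point; on $\cE$ any control point assigned to the treated point $k$ sitting at $x$ must lie in $\cB(x,\eps)$, so the conditional probability is at most $\mu_0(\cB(x,\eps))+\proba{\cE^c}$, and both terms vanish upon taking $\eps$ small (no atom at $x$) and $n_1$ large, using $\proba{\cE^c}\le m(1-p_{\min})^{n_1-1}$ with $p_{\min}>0$ by the equivalence of $\mu_0$ and $\mu_1$. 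Since each conditional probability is bounded by $1$, dominated convergence over the law of $X_k$ then gives $\proba{\sigma(i)=k\mid\sigma(j)=k}\to0$, hence the claim.

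The main obstacle is not the combinatorics, which is copied from the RCT lemma, but justifying that the observational conditioning is harmless: I must argue that conditioning on $T_i=T_j=0$ changes only the marginal feature laws (to $\mu_0$ for the controls and $\mu_1$ for the treated) while leaving the nearest-neighbor geometry intact, and that positivity keeps every relevant ball-mass and atom-mass of $\mu_1$ uniformly positive, so that the ``treated points become dense'' mechanism continues to drive both cases to zero.
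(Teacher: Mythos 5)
Your proposal is correct and follows essentially the same route as the paper's proof: the same decomposition $\proba{\sigma(i)=\sigma(j)}=\frac{1}{n_1}\sum_{k\in\cN_1}\proba{\sigma(i)=k\mid\sigma(j)=k}$, the same observation that positivity makes the conditional feature laws $\cP(X\mid T=0)$ and $\cP(X\mid T=1)$ mutually equivalent with ratio in $[\eta^2,1/\eta^2]$ (hence the same support, atoms, and positive ball masses), the same atom/non-atom case split with the binomial bound and the shell-covering event $\cE$, and the same dominated-convergence conclusion. No gaps.
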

	\begin{proof}[Proof of \Cref{lem:sigma_equal}]
		First, note that we have, for any measurable set $\cS\subset\cX$:
		\begin{equation*}
			\frac{\proba{X_i\in\cS|T_i=1}}{\proba{X_i\in\cS|T_i=0}}\in\left[\eta^2,\frac{1}{\eta^2}\right]\,.
		\end{equation*}
		Indeed, we have for $t=0,1$ and measurable set $\cS\subset\cX$ such that $\proba{X_i\in\cS}>0$:
		\begin{equation*}
			\proba{X_i\in\cS|T_i=t}=\frac{\proba{T_i=t|X_i\in\cS}}{\proba{T_i=t}}\times \proba{X_i\in\cS}\,.
		\end{equation*}
		Using overlap, this leads to $\frac{\proba{X_i\in\cS|T_i=1}}{\proba{X_i\in\cS|T_i=0}}\in\left[\eta^2,\frac{1}{\eta^2}\right]$.
		A consequence is that for all measurable $\cS\subset\cX$, we have $\proba{X\in\cS}=0\iff\proba{X\in\cS|T=0}=0\iff \proba{X\in\cS|T=1}=0$.

		We have $\proba{\sigma(i)=\sigma(j)}=\frac{1}{n_1}\sum_{k\in\cN_1}\proba{\sigma(i)=k|\sigma(j)=k}$.
		We work conditionally on $\cN_1,\cN_0$. Note that we have $n_0,n_1\to\infty$ almost surely as $n\to \infty$.
		Let $k\in\cN_1$ and $x\in\mathrm{Supp}(X)$: $\forall \eps>0,\proba{X\in\cB(x,\eps)}>0$.

		\emph{First case:} $\proba{X=x|T=1}=p_x>0$. In that case, let $N_x=|\set{\ell\in\cN_1,X_\ell=x}$.
		We have that 
		\[
			\proba{\sigma(i)=k|N_x,X_k=x,\sigma(j)=k}=\frac{1}{N_x}\,,
		\]
		and $N_x$ is a binomial random variable of parameters $(n_1,p_x)$.
		This leads to:
		\begin{align*}
			\proba{\sigma(i)=k|X_k=x,\sigma(j)=k}&=\sum_{N=1}^{n_1} 2^{-n_1}\frac{p_x^N(1-p_x)^{n_1-N}}{N}\binom{n_1}{N}\\
			&=\sum_{N=1}^{n_1} \frac{p_x^N(1-p_x)^{n_1-N}}{n_1+1}\binom{n_1+1}{N+1}\\
			&\leq\frac{1}{p_x(n_1+1)}\,.
		\end{align*}
		Thus, $\proba{\sigma(i)=k|X_k=x,\sigma(j)=k}\to 0$ as $n_1\to\infty$.

		\emph{Second case:} $\proba{X=x}=0$.
		Let $\delta\in(0,1)$.
		Let $R>\eps>0$ such that $\proba{X\in\cB(x,\eps)}< \delta$ and $\proba{X\in\cB(x,R)}>1-\delta$.
		We cover $\cB(0,R)\setminus\cB(0,\eps)$ with $m$ balls of radius $\eps/2$: $\cB(0,R)\setminus\cB(0,\eps)\subset\bigcup_{r=1}^m\cB(z_r,m)$,
		where $z_r\in\cB(0,R)\setminus\cB(0,\eps)$.
		We remove all $z_r$ that satisfy $\proba{X\in\cB(z_r,\eps/2)}=0$ from this union.
		Let $\cE$ be the event $\set{\forall r\in[m],\exists \ell\in\cN_1\setminus\set{k},X_\ell\in\cB(z_r,\eps/2)}$.
		We have that 
		\begin{align*}
			\proba{\sigma(i)=k|X_k=x,\sigma(j)=k,\cE}&\leq \proba{X_i\in\cB(x,\eps)}\\
			&\leq \delta/\eta^2\,.
		\end{align*}
		Then,
		\begin{align*}
			\proba{\cE^C}\leq&\sum_{r=1}^m  \proba{\forall \ell\in\cN_1\setminus\set{k},X_\ell\notin\cB(z_r,\eps/2)}\\
			&\leq m(1-p_{\min})^{n_1-1}\,,
		\end{align*}
		where $p_{\min}=\min_{r\in[m]}\proba{X\in\cB(z_r,\eps/2)|T=1}>0$.
		Thus, $\proba{\cE}\to 1$, and $\proba{\sigma(i)=k|X_k=x,\sigma(j)=k}\leq 1-\proba{\cE} + \delta$.
		We can thus conclude that $\proba{\sigma(i)=k|X_k=x,\sigma(j)=k}\to 0$ as $n_1\to\infty$.

		\emph{Wrapping things up.}
		Using $\proba{\sigma(i)=k|\sigma(j)=k}=\int_\cX\proba{\sigma(i)=k|\sigma(j)=k,X_k=x}\dd\P(X_k=x)$, we have $\proba{\sigma(i)=k|\sigma(j)=k}\to 0$, using dominated convergence.
	\end{proof}
Using this, we have $\var(A_2)\to 0$, leading to $A_2\to0$ in probability.


\noindent \textbf{Control of $A_1$.}
	Using unconfoundedness and overlap:
	\begin{align*}
		&\qquad\qquad |A_1| \leq \delta^{-1}\esp{\delta(X_i,X_{\sigma^\star(i)},Y_i(0))|X_{\cN_1},T_i=0}\,,\qquad\text{where}\\
		& \delta(x,x',y)= \left|\esp{w(Y_i(1)|y)|X_i=x} - \esp{w(Y_j(1)|y)|X_j=x'}\right|\,.
	\end{align*}
	Let $\eps>0$ and $y$ fixed. Using our continuity and compactness assumptions, $x,x'\mapsto \delta(x,x',y)$ is uniformly continuous on $\cX\times \cX$, so that there exists $\eta>$ such that if $\NRM{x-x'}\leq \eta$, we have $\delta(x,x',y)\leq \eps$.
	We are going to show that with high probability, $\NRM{X_i-X_{\sigma^\star(i)}}\leq \eta$.
	Using compactness of $\cX$, there exist $u_1,\ldots,u_p\in\cX$ such that $\cX\subset\bigcup_{k=1}^p\cB(u_k,\eta/2)$.
	Let $p_k=\proba{X_i\in\cB(u_k,\eta/2)}$: we assume that $p_k>0$ for all $k$, otherwise we remove this ball and the corresponding $u_k$.
	Let $p_{\min}=\min_k p_k >0$. Let $k_x\in[p]$ such that $X_i\in\cB(u_{k_x},\eta/2)$. 
	We have, working conditionnally on $\cN_0,\cN_1,i\in\cN_0$:
	\begin{align*}
		\proba{\NRM{X_i-X_{\sigma^\star(i)}}>\eta}&\leq \esp{\proba{X_{\sigma^\star(i)}\notin\cB(u_{k_x},\eta/2) |k_x }}\\
		& = \esp{\proba{\forall j\in \cN_1\,,\,X_{j}\notin\cB(u_{k_x},\eta/2) |k_x }}\\
		& = \esp{(1-\eta p_{k_x})^{n_1}}\\
		& \leq (1-\eta p_{\min})^{n_1}\\
		& \underset{n_1\to \infty}{\longrightarrow} 0\,.
	\end{align*}
	This leads to:
	\begin{equation*}
		\proba{\delta(X_i,X_{\sigma(i)},y)>\eps}\leq (1-\eta p_{\min})^{n_1}\,,
	\end{equation*}
	and thus $\proba{\delta(X_i,X_{\sigma(i)},y)\to 0}=1$ as $n_1\to\infty$, leading to $\esp{\delta(X_i,X_{\sigma(i)},Y_i(0))}\to 0$ using dominated convergence.
	We thus have that $\esp{|A_1|}\to 0$, and thus $A_1\to 0$ in probability, since $|A_1|\leq 1$ almost surely.
	
\end{proof}

\section{Proof: Consistency of Distributional Regression}

\begin{proof}[Proof of \Cref{thm:WR_DRF}]
	We have, for the estimator $\hat\tau$ defined in \Cref{eq:estim_without_IPW_DRF}:
	\begin{align*}
		\hat \tau-\tau_\star &= \frac{1}{n}\sum_{i=1}^n (1-T_i)\hat q_1(X_i,Y_i) + T_i\hat q_0(X_i,Y_i) - \esp{w(Y^{(X_i)}(1),Y_i(0))}\\
		&= \frac{1}{n}\sum_{i=1}^n (1-T_i) q_1(X_i,Y_i) + T_i q_0(X_i,Y_i) - \esp{w(Y^{(X_i)}(1),Y_i(0))}\\
		&\quad + \frac{1}{n}\sum_{i=1}^n (1-T_i)(\hat q_1(X_i,Y_i)-q_1(X_i,Y_i)) + T_i(\hat q_0(X_i,Y_i)- q_0(X_i,Y_i))\,.
	\end{align*}
	For the first term, we have that $(1-T_i) q_1(X_i,Y_i) + T_i q_0(X_i,Y_i)$ are \emph{i.i.d.} bounded random variables, of mean $\esp{w(Y^{(X_i)}(1),Y_i(0))}$,
	so that the first sum converges almost surely to 0.
	We even have, using the central limit theorem, that:
	\begin{equation*}
		\frac{1}{\sqrt n}\sum_{i=1}^n (1-T_i) q_1(X_i,Y_i) + T_i q_0(X_i,Y_i) - \esp{w(Y^{(X_i)}(1),Y_i(0))} \underset{\P}{\longrightarrow} \cN(0,\sigma_\infty^2)\,,
	\end{equation*}
	where $\sigma_\infty^2=\var\left((1-T_i) q_1(X_i,Y_i) + T_i q_0(X_i,Y_i) - \esp{w(Y^{(X_i)}(1),Y_i(0))}\right)$.
	For the second term, we have:
	\begin{align*}
		&\quad \left|\frac{1}{n}\sum_{i=1}^n (1-T_i)(\hat q_1(X_i,Y_i)-q_1(X_i,Y_i)) + T_i( q_0(X_i,Y_i)-\hat q_0(X_i,Y_i))\right|\\
		&\leq \frac{1}{n}\sum_{i=1}^n |\hat q_1(X_i,Y_i)-q_1(X_i,Y_i)| + |\hat q_0(X_i,Y_i)- q_0(X_i,Y_i)|\\
		&\leq \frac{1}{n}\sum_{i=1}^n (1-T_i)\big(|\hat q_1(X_i,Y_i)-q_1(X_i,Y_i)|-\esp{|\hat q_1(X_i,Y_i)-q_1(X_i,Y_i)||T_i=0}\big)\\
		&\qquad + T_i\big(|\hat q_0(X_i,Y_i)- q_0(X_i,Y_i)|-\esp{|\hat q_1(X_i,Y_i)-q_1(X_i,Y_i)|T_i=1}\big)\\
		&\quad + \esp{|\hat q_1(X_i,Y_i)-q_1(X_i,Y_i)||T_i=0} + \esp{|\hat q_0(X_i,Y_i)-q_0(X_i,Y_i)||T_i=1}\,.
	\end{align*}
	These last two terms are deterministic and converge to zero due to our assumptions.
	The big sum converges almost surely to zero, as the average of $n$ \emph{i.i.d.} centered and bounded random variables.
	This leads to the consistency of our estimator.
	For the asymptotic normality, it remains to prove that 
	\begin{align*}
		A_i&= \frac{1}{n}\sum_{i=1}^n (1-T_i)\big(|\hat q_1(X_i,Y_i)-q_1(X_i,Y_i)|-\esp{|\hat q_1(X_i,Y_i)-q_1(X_i,Y_i)||T_i=0}\big)\\
		&\qquad + T_i\big(|\hat q_0(X_i,Y_i)- q_0(X_i,Y_i)|-\esp{|\hat q_1(X_i,Y_i)-q_1(X_i,Y_i)|T_i=1}\big)
	\end{align*}
	is $o(1/\sqrt{n})$.
	We have 
	\begin{align*}
		&\quad\var\left((1-T_i)\big(|\hat q_1(X_i,Y_i)-q_1(X_i,Y_i)|-\esp{|\hat q_1(X_i,Y_i)-q_1(X_i,Y_i)||T_i=0}\big)\right)\\
		&\leq\var\left(|\hat q_1(X_i,Y_i)-q_1(X_i,Y_i)||T_i=0\right)\\
		&\leq\esp{|\hat q_1(X_i,Y_i)-q_1(X_i,Y_i)|^2|T_i=0}\\
		&\leq \esp{|\hat q_1(X_i,Y_i)-q_1(X_i,Y_i)||T_i=0}\,,
	\end{align*}
	since $|\hat q_1(X_i,Y_i)-q_1(X_i,Y_i)|\leq 1$ (these are probabilities). Thus, under ou assumption, the variance of each term of $A_i$ is $o(\sqrt{1/n})$.
	Then, $\proba{|1/n\sum_iA_i|>\eps/\sqrt{n}}\leq \var(A_i)\to 0$. This leads to $1/n\sum_iA_i|=o_\P(1/\sqrt{n})$, and concludes the proof.
\end{proof}

\end{document}